\tikzstyle{vertex}=[circle,fill=black!0,minimum size=9pt,inner sep=0pt]
\tikzstyle{bigvertex}=[circle,fill=black!0,minimum size=15pt,inner sep=0pt]
\tikzstyle{rect}=[rectangle, rounded corners]
\tikzstyle{edge} = [draw,-,rounded corners=8pt]
\newcommand{\defproblem}[3]{
    \vspace{2mm}
    \vspace{1mm}
    \noindent\fbox{
        \begin{minipage}{0.97\textwidth}
            #1\newline
            {\bf{Input:}} #2\newline
            {\bf{Task:}} #3
        \end{minipage}
    }
    \vspace{2mm}
}
\newcommand{\WHITE}{\texttt{white}}
\newcommand{\BLACK}{\texttt{black}}
\newcommand{\Graph}{G_{\Psi}}
\newcommand{\Ufamily}{\mathcal{S}_\mathcal{U}(G_{\Psi})}
\newcommand{\g}{f^{-1}}
\newcommand{\interval}[5]{%
  \node at (#2+#4/2,#3+0.2) {\scriptsize $I_1(#1)$};
  \draw[color=#5] (#2,#3) node{} -- (#2+#4,#3) node{}; %
  \draw[color=#5] (#2,#3+0.1) -- (#2,#3-0.1);
  \draw[color=#5] (#2+#4,#3+0.1) -- (#2+#4,#3-0.1);
}
\newcommand{\intervalsecond}[5]{%
  \node at (#2+#4/2,#3+0.2) {\scriptsize $I_2(#1)$};
  \draw[color=#5] (#2,#3) node{} -- (#2+#4,#3) node{}; %
  \draw[color=#5] (#2,#3+0.1) -- (#2,#3-0.1);
  \draw[color=#5] (#2+#4,#3+0.1) -- (#2+#4,#3-0.1);
}
\newcommand{\interv}[4]{%
  \node at (#2+0.5,#3+0.2) {\scriptsize $I_1(#1)$};
  \draw[color=#4] (#2,#3) node{} -- (#2+1,#3) node{}; %
  \draw[color=#4] (#2,#3+0.1) -- (#2,#3-0.1);
  \draw[color=#4] (#2+1,#3+0.1) -- (#2+1,#3-0.1);
}
\newcommand{\intervsecond}[4]{%
  \node at (#2+0.5,#3+0.2) {\scriptsize $I_2(#1)$};
  \draw[color=#4] (#2,#3) node{} -- (#2+1,#3) node{}; %
  \draw[color=#4] (#2,#3+0.1) -- (#2,#3-0.1);
  \draw[color=#4] (#2+1,#3+0.1) -- (#2+1,#3-0.1);
}
\renewcommand{\leq}{\leqslant}
\renewcommand{\geq}{\geqslant}
\title{Recognizing unit multiple intervals is hard}
\author{Virginia Ardévol Martínez}{Universit\'{e} Paris-Dauphine, PSL University, CNRS, LAMSADE, 75016 Paris, France}{virginia.ardevol-martinez@dauphine.psl.eu}{https://orcid.org/0000-0002-3703-2335}{}
\author{Romeo Rizzi}{Department of Computer Science, University of Verona, Italy}{romeo.rizzi@univr.it}{https://orcid.org/0000-0002-2387-0952}{}
\author{Florian Sikora}{Universit\'{e} Paris-Dauphine, PSL University, CNRS, LAMSADE, 75016 Paris, France}{florian.sikora@dauphine.fr}{https://orcid.org/0000-0003-2670-6258}{}
\author{Stéphane Vialette}{LIGM, CNRS, Univ Gustave Eiffel, F77454 Marne-la-Vallée, France}{ stephane.vialette@univ-eiffel.fr}{https://orcid.org/0000-0003-2308-6970}{}
\authorrunning{V. Ardévol Martínez, R. Rizzi, F. Sikora, S. Vialette} %
\keywords{Interval graphs, unit multiple interval graphs, recognition, NP-hardness} %
\begin{document}
\maketitle
\begin{abstract}
Multiple interval graphs are a well-known generalization of interval graphs introduced in the 1970s to deal with situations arising naturally in scheduling and allocation. 
A $d$-interval is the union of $d$ intervals on the real line, and a graph is a $d$-interval graph if it is the intersection graph of $d$-intervals.
In particular, it is a unit $d$-interval graph if it admits a $d$-interval representation where every interval has unit length.

Whereas it has been known for a long time that recognizing 2-interval graphs and other related classes such as 2-track interval graphs is \NP-complete, the complexity of recognizing unit 2-interval graphs remains open.
Here, we settle this question by proving that the recognition of unit 2-interval graphs is also \NP-complete. Our proof technique uses a completely different approach from the other hardness results of recognizing related classes. Furthermore, we extend the result for unit $d$-interval graphs for any $d\geq 2$, which does not follow directly in graph recognition problems --as an example, it took almost 20 years to close the gap between $d=2$ and $d> 2$ for the recognition of $d$-track interval graphs. Our result has several implications, including that recognizing $(x, \dots, x)$ $d$-interval graphs and depth $r$ unit 2-interval graphs is \NP-complete for every $x\geq 11$ and every $r\geq 4$.

\end{abstract}

\section{Introduction}

\emph{Interval graphs} are undirected graphs formed from a set of intervals on the real line, with a vertex for each interval and an edge between vertices whose intervals intersect. In particular, they are chordal and perfect graphs.
Due to its numerous applications
the class of interval graphs is one of the most well-studied classes of graphs~\cite{roberts1978graph,Fishburn1985,mckee1999topics}. 
These include DNA mapping~\cite{zhang1994algorithm},
resource allocation problems in scheduling theory~\cite{bar2001unified} and
ecological niche and food web~\cite{10.2307/j.ctvx5wc04}.

The practical applications of interval graphs have led to the study of various generalizations, including multiple interval graphs \cite{mcguigan1977presentation,DBLP:journals/jgt/TrotterH79,DBLP:journals/siammax/GriggsW80}. A graph is a $d$-interval graph if each vertex is associated with a $d$-interval (the union of $d$ disjoint intervals on the real line) instead of a simple interval, and again, there is an edge between two vertices if and only if the corresponding $d$-intervals overlap at some point of the real line. 
This generalization enables us to model more complex situation arising naturally in scheduling and allocation problems, such as multi-task scheduling, allocation of multiple associated linear resources, or transmission of continuous-media data \cite{Bar-Yehuda2006}.
Applications to bioinformatics, namely to model DNA sequence similarity or RNA secondary structure \cite{Joseph1992,Vialette2004}, increased the interest in this class of graphs. 

Inside the class of multiple interval graphs, different restrictions have been studied. One of the most natural ones is the subclass of unit $d$-interval graphs, which corresponds to $d$-interval graphs that have an interval representation where every interval has unit length.
Unit multiple intervals can be applied, for example, to model tasks of the same duration in scheduling. %

Apart from their concrete applications, another reason why interval graphs have been widely studied in the literature is because many problems that are \NP-hard in general graphs become polynomial-time solvable when restricted to interval graphs: colorability, clique, independent set, or Hamiltonian cycle, to name a few. 
In particular, recognizing interval graphs is also polynomial, and more precisely, it can be done in linear time \cite{Booth1976,corneil2010lbfs}. 
Furthermore, there exist multiple characterizations of interval graphs, including a characterization in terms of forbidden induced subgraphs \cite{lekkeikerker1962representation}.
This is also the case for unit interval graphs \cite{zbMATH03307330}, which are exactly graphs that do not contain any claw, tent, net, or induced cycle of length at least 4. Unit interval graphs are also characterized as interval graphs that are claw-free \cite{roberts1978graph}.

However, for multiple interval graphs, most problems remain hard, even their recognition, and they do not have any simple characterization. In particular, they are neither chordal graphs nor perfect graphs.
It is known that \textsc{Maximum Clique} remains $\NP$-complete in multiple interval graphs, even for unit 2-intervals \cite{DBLP:journals/algorithmica/Francis0O15}, and so do other problems such as \textsc{Independent Set} or \textsc{Dominating Set} \cite{Bar-Yehuda2006,Butman2010}. The parameterized complexity of some of these problems in multiple interval graphs has also been studied, see for instance \cite{DBLP:journals/algorithmica/Jiang13,Fellows2009}.
With respect to the recognition of multiple interval graphs, it was proven to be $\NP$-hard in 1984 \cite{West1984}. More precisely, West and Shmoys showed that determining whether the interval number of a graph (i.e., the smallest integer $d$ such that the graph has a disjoint $d$-interval representation) is smaller or equal to $d$, for any $d\geq 2$, is $\NP$-complete. 
Furthermore, they also proved that for any $r\geq 3$ and any $d\geq 2$, determining whether a graph has an $r$-depth $d$-interval representation (i.e., a $d$-interval representation with at most $r$ intervals sharing a common point) is $\NP$-complete. On the other hand, the complexity of recognizing depth 2 $d$-interval graphs is still open, although it is known to be polynomial for depth 2 unit $d$-interval graphs \cite{DBLP:journals/algorithmica/Jiang13}.
The above-mentioned proof of hardness (for unrestricted depth) was then adapted by Gambette and Vialette for balanced $2$-intervals \cite{Gambette2007}, which are 2-interval graphs that admit a representation such that every 2-interval is composed of two intervals of the same length, while intervals of different 2-intervals can have different lengths. In the same paper, the authors also initiate the study of the recognition of unit 2-interval graphs and of $(x, x)$ 2-interval graphs (where the two disjoint open intervals have integer endpoints and have length $x$), but the complexity of both problems remained unsettled. Note that contrary to the previous characterization by Roberts of unit interval graphs, unit 2-interval graphs \emph{cannot} be characterized as $K_{1,5}$-free 2-interval graphs \cite{alexandre}. %

Another well-studied generalization of interval graphs are $d$-track interval graphs, where each vertex is associated to the \emph{union} of $d$ disjoint intervals, each in a different parallel line called \emph{track}.
Gyárfás and West proved that their recognition is \NP-hard for $d=2$, and conjectured the same for $d\geq 3$~\cite{gyarfas1995multitrack}. 
This conjecture was proven way later in \cite{DBLP:journals/algorithmica/Jiang13} by Jiang, who also showed that recognition remains hard for unit $d$-track interval graphs for any $d \geq 2$, but left the recognition of unit $d$-interval graphs as an open question. %

Multiple track interval graphs can be seen as the union of interval graphs. In the same manner, $d$-boxicity graphs can be seen as the \emph{intersection} of interval graphs. Boxicity is a graph invariant introduced by Roberts \cite{roberts1969recent} and it is the minimum dimension in which a graph can be represented as the intersection graph of boxes. Furthermore, given a graph $G=(V,E)$, it corresponds to the minimum number of interval graphs on the set of vertices $V$ such that the intersection of their edge sets is $G$. Their recognition is $\NP$-complete~\cite{cozzens1982higher,yannakakis1982complexity}, even for $d=2$~\cite{KRATOCHVIL1994233}.

 In this paper, we finally settle the complexity of the recognition of unit 2-interval graphs, answering the open question by  Jiang~\cite{DBLP:journals/algorithmica/Jiang13}. %
 To do so, we prove that it is \NP-hard by reducing from \textsc{Satisfiability} instead of \textsc{Hamiltonian Path}, which has been often used for proving the hardness of the recognition of variants of interval graphs. 
 The reductions from \textsc{Hamiltonian path} in triangle-free cubic graphs used previously to prove the hardness of recognizing $d$-interval graphs, balanced $d$-interval graphs and $d$-track interval graphs all use a special vertex which is adjacent to $n$ vertices of a triangle free graph, and therefore, cannot be directly adapted for unit $2$-interval graphs. We then extend the hardness result for unit $d$-interval graphs, for any $d \geq 2$. Note that, as pointed out in the concluding remarks of~\cite{DBLP:journals/algorithmica/Jiang13}, recognition problems are very different from optimization problems, and the boundary of a graph class is not necessarily harder than that of a subclass\footnote{As an example, the class of $K_{1,5}$-free graphs, which admits a brute-force $\mathcal{O}(n^6)$ time recognition algorithm, contains the class of unit 2-track interval graphs, which is \NP-hard to recognize~\cite{DBLP:journals/algorithmica/Jiang13}.}. Thus, even though one would expect the recognition of unit $d$-interval graphs to be hard for any $d$ if it's hard for $d=2$, it is not directly implied. 
 
 Our result has several consequences, namely that recognizing $(x,\dots ,x)$ $d$-interval graphs and depth $r$ unit $d$-interval graphs is \NP-complete for every $x\geq 11$ and every $r\geq 4$. %
 Finally, our reduction implies as well a lower bound under the ETH.
 
\subparagraph*{Structure of the paper.}

The paper is organized as follows. 
Section \ref{definitions} briefly introduces the necessary concepts and definitions. 
In Section \ref{hardness}, we present the results of the paper. First, in \autoref{coloredunit}, we prove that a generalization of the recognition of unit 2-intervals, \textsc{Colored unit 2-interval recognition}, is \NP-complete. Then, we use this result in \autoref{unit} to prove the main theorem of the paper, which states the \NP-completeness of \textsc{Unit 2-interval recognition}. Finally, we present several implications of our result in \autoref{corollaries}, namely the \NP-completeness of recognizing unit $d$-interval graphs for every $d\geq 2$, and of recognizing $(x,\dots ,x)$ $d$-interval graphs and depth $r$ unit $d$-interval graphs for every $x\geq 11$ and every $r\geq 4$. We conclude with some directions for future work in \autoref{conclusion}.

\section{Definitions}\label{definitions}

An \emph{interval} is a set of real numbers of the form $[a,b] := \{x\in \mathbb{R} \mid a\leq x \leq b\}$.\footnote{In the literature, it is not always specified whether the intervals considered for the intersection representation of interval graphs are open or closed. As discussed in~\cite{rautenbach2013unit}, the reason for this might be that both definitions lead to the same class of finite graphs~\cite{frankl1987open}, even for unit interval graphs. However, note that if we allow the use of both open and closed intervals within one representation, then the class of unit interval graphs obtained is not the same as if we only allowed open or closed intervals within one representation~\cite{rautenbach2013unit}.}

A \emph{$d$-interval} is the union of $d$ disjoint intervals. A $d$-interval is \emph{balanced} if all its $d$ intervals have the same length, and \emph{unit} when this common length is $1$.
A family $\mathcal{F}$ of $d$-intervals is \emph{balanced} (resp., \emph{unit}) if it comprises only balanced (resp., unit) $d$-intervals. Notice that, for $d\geq 2$, different $d$-intervals of a same balanced family may comprise $1$-intervals with different lengths.  
A family $\mathcal{F}$ of $d$-intervals can be used as a representation of the graph $\Omega \, (\mathcal{F}\,)$ having the $d$-intervals of $\mathcal{F}$ as its vertex set, and where two $d$-intervals are adjacent if and only if their intersection is not empty. A graph $G$ is called a (possibly balanced, unit) \emph{$d$-interval graph} when it admits a representation $\mathcal{F}$ consisting only of (respectively balanced, unit) $d$-intervals. Notice that the representing family is not unique (in fact, even only by translating all intervals by a same value, we already obtain an infinite number of them). Multiple interval graphs generalize the standard notion of interval graphs (special case for $d=1$). In this paper, we will use the term \emph{unit 1-interval} (resp. \emph{unit 1-interval graph}) to denote a classical unit interval (resp. a classical unit interval graph), to avoid confusion with a unit 2-interval (resp. unit 2-interval graphs).

Note that many references do not specify whether the intervals of a $d$-interval must be disjoint or not, and some even define them as the union of $d$ not necessarily disjoint intervals \cite{DBLP:journals/jgt/TrotterH79}. However, this might be related to the fact that, when there are no restrictions on the length of the intervals, the two definitions lead to the same class of graphs. This is not true for unit $d$-intervals, so we study the case where disjointness is required, as in the hardness proof of recognizing multiple interval graphs~\cite{West1984}.

A $d$-interval graph is \emph{proper} when it admits a representing family $\mathcal{F}$ such that no $1$-interval is properly contained in another one. The classes of proper and unit $1$-interval graphs are equivalent, and they correspond exactly to $K_{1,3}$-free interval graphs. 
The graph $K_{1,3}$ is the star with 3 leaves, and is also called a \emph{claw}. Equivalently, unit interval graphs are known to be exactly those graphs that do not contain any claw, tent, net, or cycle of length at least 4 as an induced subgraph \cite{zbMATH03307330}.  

A $d$-interval is a $(x_1,\dots,x_d)$ $d$-interval if the $d$ disjoint intervals are open, have integer endpoints, and have lengths $x_1, \dots, x_d$, respectively. 

The \emph{depth} of a family of intervals is the maximum number of intervals that share a common point, and the \emph{representation depth} of a $d$-interval graph is the minimum depth of any $d$-interval representation of the graph.

The hierarchy of subclasses of $d$-interval graphs is as follows \cite{Gambette2007,DBLP:journals/algorithmica/Jiang13}:
$
    (x,\dots, x) \subset (x+1, \dots, x+1) \subset \textit{unit} \subset \textit{balanced} \subset \textit{unrestricted}
$.

The problem \textsc{Unit 2-interval recognition} is defined as follows.

\defproblem{\textsc{Unit 2-interval recognition}}
{A graph $G=(V,E)$}
{Decide whether $G$ has a unit 2-interval representation.
}

Furthermore, we define a more general version of the above problem, which will be useful to prove the hardness of \textsc{Unit 2-interval recognition}.

\defproblem{\textsc{Colored unit 2-interval recognition}}
{A graph $G=(V,E)$ and a coloring $\gamma: V \to \{\WHITE, \BLACK\}$.} %
{Decide whether $G$ has a unit 2-interval representation where:
\begin{itemize}
\item each white vertex is represented by a unit 2-interval,
\item each black vertex is represented by a unit 1-interval.
\end{itemize}
We refer to this representation as a \emph{colored unit 2-interval representation}.

}

\section{Hardness of recognizing unit multiple interval graphs}\label{hardness}
In this section, we prove the main result of this paper, which is the hardness of recognizing unit 2-interval graphs, used later on to prove the hardness of recognizing unit $d$-intervals for every $d\geq 2$. 
The result for $d=2$ is obtained in two steps.
We first prove that the more general version \textsc{Colored unit 2-interval representation} 
is \NP-complete, and then reduce this problem to \textsc{Unit 2-interval recognition}, which yields the main result of this paper.

\subsection{Hardness of \textsc{Colored Unit 2-Interval Recognition}}\label{coloredunit}
Before proceeding to the hardness proof of \textsc{Colored unit 2-interval recognition}, we first introduce the variant of \textsc{SAT} that we will reduce from. In the following, we use the term ``$j$-clause'' to refer to a clause that contains exactly $j$ literals.

\begin{lemmarep}[\cite{DBLP:journals/algorithmica/FellowsKMP95}]
  \label{Restricted Sat}
  \textsc{Satisfiability} is \NP-complete even when restricted to CNF-formulae such that:
  \begin{enumerate}
      \item Every clause contains either $3$ literals ($3$-clause) or $2$ literals ($2$-clause). 
      \item Each variable appears in exactly one $3$-clause.\label{item2}
      \item Each $3$-clause is positive monotone, i.e., is comprised of three positive literals.
      \item Each variable occurs exactly in three clauses, once negated and twice positive.
  \end{enumerate}
\end{lemmarep}
\begin{proof}
This Lemma is proven in \cite[Lemma 2.1]{DBLP:journals/algorithmica/FellowsKMP95}. 
Note that condition~(\ref{item2}) is not explicitly stated in the Lemma's original statement.
However, upon close examination of the proof of Lemma~2.1 given in \cite{{DBLP:journals/algorithmica/FellowsKMP95}}, one can see that condition~(\ref{item2}) holds for all the instances of \textsc{Satisfiability} produced by the proposed reduction if we reduce from an instance of \textsc{3-SAT}. %
Specifically, in the proof, each occurrence of a variable in the original formula is replaced by a new variable, and each new variable (which corresponds to an occurrence of an original variable) also appears in two new 2-clauses. Since the new variable occurs only in these three clauses, it follows that there is exactly one occurrence in a 3-clause if the original instance is an instance form \textsc{3-SAT}. 
\end{proof}

We can now proceed to the proof of hardness of \textsc{Colored Unit $2$-Interval Recognition}. %

\begin{theorem}\label{thm:coloredunit2hard}
\textsc{Colored Unit $2$-Interval Recognition} is \NP-complete, even for graphs of degree at most 6.
\end{theorem}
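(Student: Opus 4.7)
Membership in \NP\ is immediate: a colored unit 2-interval representation, if one exists, can be taken to have rational endpoints of total bit length polynomial in $|V|$, so the representation itself is a polynomial-size certificate that can be verified in polynomial time.

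For hardness, the plan is to reduce from the restricted variant of \textsc{Satisfiability} given by the previous lemma. Given a formula $\Psi$ with variable set $X$ and clause set $\mathcal{C}$, I would construct in polynomial time a colored graph $(G_\Psi, \gamma_\Psi)$ of maximum degree at most $6$ such that $G_\Psi$ admits a colored unit 2-interval representation if and only if $\Psi$ is satisfiable. The graph is built out of one \emph{variable gadget} per variable of $\Psi$ and one \emph{clause gadget} per clause, glued together through shared \BLACK\ vertices that represent the literal occurrences. Since every variable appears in exactly three clauses and every clause contains at most three literals, the interface between gadgets is bounded and the degree constraint of $6$ should fall out of the construction rather than require extra work.

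The variable gadget for $x\in X$ should be rigid enough that, in any colored unit 2-interval representation, it admits exactly two canonical layouts, to be read as $x=\emph{true}$ and $x=\emph{false}$. The natural idea is to exploit the fact that a \WHITE\ vertex is a unit 2-interval offering two disjoint unit slots: the slots are forced by black neighbours to lie in the vicinity of the three literal-occurrence anchors of $x$, but they can only cover two of the three anchor regions simultaneously. Because $x$ occurs twice positively and once negatively, the two feasible layouts are precisely the one that covers the two positive-occurrence anchors (encoding \emph{true}) and the one that covers the single negative-occurrence anchor (encoding \emph{false}). The clause gadget for $C\in\mathcal{C}$ uses a dedicated \WHITE\ vertex $v_C$ whose 2-interval is constrained by adjacencies with \BLACK\ anchors of the literals of $C$ in such a way that $v_C$ has a legal placement if and only if at least one literal of $C$ is currently in its satisfying layout, i.e., if the corresponding variable gadget has ``vacated'' the region in which the clause vertex wants to sit.

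The main obstacle, and where the bulk of the work lies, is designing variable and clause gadgets that are simultaneously \emph{rigid}, meaning that only the two intended configurations survive the local constraints, and \emph{locally composable}, meaning that neighbouring gadgets do not break each other's rigidity. The strategy is to build short unit 1-interval paths of \BLACK\ vertices as positional ``rulers'' that pin down the admissible positions of each \WHITE\ vertex's two slots, and then to prove rigidity by a careful case analysis of where those two unit intervals can lie subject to their neighbourhood constraints, in the spirit of the essentially unique ordering arguments for proper interval graphs. Once rigidity of each gadget is established, correctness follows: from a satisfying assignment of $\Psi$ the representation is laid out gadget by gadget, and conversely any valid representation reads off a truth assignment under which every clause is satisfied, since otherwise the rigid clause gadget would admit no legal placement. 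The degree bound of $6$ is then checked by a final local inspection of the gadget interfaces.
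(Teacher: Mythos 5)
Your proposal is a plan rather than a proof: it identifies the right starting point (the restricted \textsc{SAT} variant, one rigid variable gadget with two canonical layouts, clause gadgets that become realizable only when some literal gadget frees up room), but every step that carries the actual mathematical weight is deferred. You never exhibit the variable gadget, never exhibit the clause gadget, and explicitly postpone the rigidity proof (``where the bulk of the work lies'') and the degree bound (``checked by a final local inspection''). The difficulty of this theorem is precisely there: one must encode statements like ``the region is vacated'' or ``the slot is pinned near this anchor'' purely in the language of adjacency and non-adjacency, and then prove that \emph{every} colored unit 2-interval representation of the gadget is of one of the two intended types. The paper does this by making the three literal occurrences of each variable \emph{white} vertices attached to three private black vertices, reformulating the question combinatorially via vertex splits (a colored graph is a colored unit 2-interval graph iff some split of its white vertices yields a unit interval graph), and then running a forbidden-induced-subgraph case analysis (claw, net, $C_4$) to show that exactly two split configurations survive, one leaving a spare representative for the positive occurrences and one for the negated occurrence. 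Your sketch has no analogue of this mechanism; in fact your architecture (black literal anchors, a single white vertex per variable whose two slots must cover two of three ``anchor regions'', a white clause vertex $v_C$ that is \emph{adjacent} to the anchors of its literals yet is supposed to be placeable only when one of them is satisfied) is not shown to be enforceable: adjacency to an anchor forces intersection in every representation, so the conditional ``legal placement iff some literal is satisfied'' cannot come from those adjacencies alone, and no crowding or forbidden-subgraph argument is provided to replace it.

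Concretely, to turn this into a proof you would need (i) explicit gadgets, e.g.\ the paper's triangle of black vertices $A_i,B_i,C_i$ joined to three white literal vertices, with 3-clauses encoded by a triangle on literal vertices and 2-clauses by an extra black vertex with a pendant; (ii) a combinatorial certificate replacing geometric reasoning, such as the split formalism, since arguing directly about interval positions is slippery (representations can be shifted and the ``rulers'' you invoke do not pin absolute positions); (iii) a complete case analysis proving that only the two intended configurations of each variable gadget embed into a unit interval graph, and that every clause gadget forces at least one incident literal to have a free slot; and (iv) a verification of both directions of the equivalence and of the degree-6 bound. As it stands, the proposal would not be accepted as a proof of the theorem.
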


The rest of the subsection is dedicated to the proof of \autoref{thm:coloredunit2hard}. We first describe the construction used for the reduction and then prove its correctness. 

\subparagraph*{Construction}
Let $\Psi$ be an instance of the variant of \textsc{SAT} described in \autoref{Restricted Sat}, formed by a set of Boolean variables $x_1, \dots , x_n$ and a set of clauses $C_1, \dots , C_m$. %
We construct an equivalent instance $(G_{\Psi}, \gamma_{\Psi})$ of \textsc{Colored unit 2-interval recognition} as follows.

For every variable $x_i$, we introduce the variable gadget $\hat{V_i}$ (truth setting component), which is the vertex-colored graph on three black vertices $A_i$, $B_i$, $C_i$ and three white vertices $x_i^1$, $x_i^2$ and $x_i^N$, with all edges between a black vertex and a white vertex, plus the edges $(x_i^1, x_i^2)$, $(C_i,A_i)$ and $(C_i, B_i)$. We anticipate that the white vertices of $\hat{V_i}$ will be adjacent also to vertices outside $\hat{V_i}$; in order to underline this distinction, these three vertices are called \emph{public}, and the black vertices are called \emph{private}.

        \begin{figure}[h]
        \begin{center}
        \begin{tikzpicture}[transform shape, scale=0.85]
          \node[vertex,draw,white,minimum size=0.7cm,fill=black] (A) at (-1,0.5) {$A_i$};
            \node[vertex,draw,white,minimum size=0.7cm,fill=black] (B) at (-1,-0.5) {$B_i$};
            \node[vertex,draw,white,minimum size=0.7cm,fill=black] (C) at (-2,0) {$C_i$};
            \node[vertex,draw,minimum size=0.7cm,fill=white!40] (x1) at (0,1) {$x_i^1$};
            \node[vertex,draw,minimum size=0.7cm,fill=white!40] (x2) at (0,0) {$x_i^2$};
            \node[vertex,draw,minimum size=0.7cm,fill=white!40] (xn) at (0,-1) {$x_i^N$};
            \draw (C) -- (A) -- (x1) -- (x2) -- (A) -- (xn) -- (B) -- (x2) -- (x1) -- (B) -- (C) -- (x2);
            \draw (xn) edge[bend left] (C) (C) edge [bend left] (x1);
    \end{tikzpicture}
    \caption{Variable gadget $\hat{V_i}$ corresponding to a variable $x_i$. Black vertices are displayed with a black background.}\label{fig:gadget variable}
        \end{center}
        \end{figure}
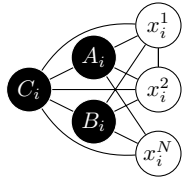

        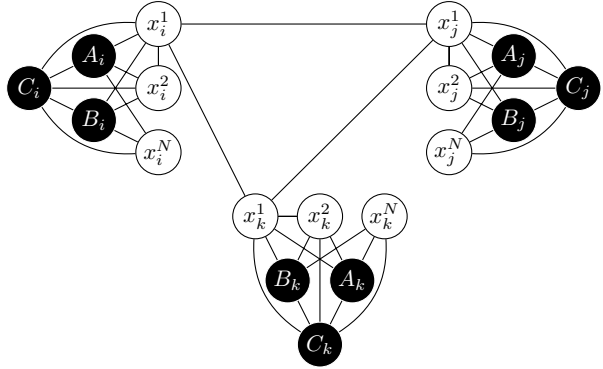
\begin{figure}
        \begin{center}
        \begin{tikzpicture}[transform shape, scale=0.85]
          \node[vertex,draw,white,minimum size=0.7cm,fill=black] (A1) at (-1.5,0.5) {$A_i$};
            \node[vertex,draw,white,minimum size=0.7cm,fill=black] (B1) at (-1.5,-0.5) {$B_i$};
            \node[vertex,draw,white,minimum size=0.7cm,fill=black] (C1) at (-2.5,0) {$C_i$};
            \node[vertex,draw,minimum size=0.7cm,fill=white!40] (x11) at (-0.5,1) {$x_i^1$};
            \node[vertex,draw,minimum size=0.7cm,fill=white!40] (x12) at (-0.5,0) {$x_i^2$};
            \node[vertex,draw,minimum size=0.7cm,fill=white!40] (x1n) at (-0.5,-1) {$x_i^N$};
            \draw (C1) -- (A1) -- (x11) -- (x12) -- (A1) -- (x1n) -- (B1) -- (x12) -- (x11) -- (B1) -- (C1) -- (x12);
            \draw (x1n) edge[bend left] (C1) (C1) edge [bend left] (x11);

            \begin{scope}[xshift=5cm]
            \node[vertex,draw,white,minimum size=0.7cm,fill=black] (A2) at (0,0.5) {$A_j$};
            \node[vertex,draw,white,minimum size=0.7cm,fill=black] (B2) at (0,-0.5) {$B_j$};
            \node[vertex,draw,white,minimum size=0.7cm,fill=black] (C2) at (1,0) {$C_j$};
            \node[vertex,draw,minimum size=0.7cm,fill=white!40] (x21) at (-1,1) {$x_j^1$};
            \node[vertex,draw,minimum size=0.7cm,fill=white!40] (x22) at (-1,0) {$x_j^2$};
            \node[vertex,draw,minimum size=0.7cm,fill=white!40] (x2n) at (-1,-1) {$x_j^N$};
            \draw (C2) -- (A2) -- (x21) -- (x22) -- (A2) -- (x2n) -- (B2) -- (x22) -- (x21) -- (B2) -- (C2) -- (x22);
            \draw (x2n) edge[bend right] (C2) (C2) edge [bend right] (x21);
            \end{scope}

            \node[vertex,draw,white,minimum size=0.7cm,fill=black] (A3) at (2.5,-3) {$A_k$};
            \node[vertex,draw,white,minimum size=0.7cm,fill=black] (B3) at (1.5,-3) {$B_k$};
            \node[vertex,draw,white,minimum size=0.7cm,fill=black] (C3) at (2,-4) {$C_k$};
            \node[vertex,draw,minimum size=0.7cm,fill=white!40] (x31) at (1,-2) {$x_k^1$};
            \node[vertex,draw,minimum size=0.7cm,fill=white!40] (x32) at (2,-2) {$x_k^2$};
            \node[vertex,draw,minimum size=0.7cm,fill=white!40] (x3n) at (3,-2) {$x_k^N$};
            \draw (C3) -- (A3) -- (x31) -- (x32) -- (A3) -- (x3n) -- (B3) -- (x32) -- (x31) -- (B3) -- (C3) -- (x32);
            \draw (x3n) edge[bend left] (C3) (C3) edge [bend left] (x31);

            \draw (x11) -- (x21) -- (x31) -- (x11);
    \end{tikzpicture}
    \caption{Clause gadget $\hat{C_{\alpha}}$ associated to a 3-clause $C_{\alpha}=(x_i \vee x_j \vee x_k)$. Note that in the final graph, each vertex $x_i^m, x_j^m, x_k^m$, for every $m\in\{1,2,N\}$, will be incident to exactly 2 edges linking them to vertices outside their variable gadget.}\label{fig:construction}
        \end{center}
        \end{figure}

\autoref{fig:gadget variable} illustrates the variable gadget $\hat{V_i}$. Notice that the three white node $x_i^1, x_i^2, x_i^N$ correspond each to precisely one of the occurrences of the represented variable $x_i$: vertex $x_i^N$ represents the negated occurrence of $x_i$, vertex $x_i^1$ represents the positive occurrence in a 3-clause, and vertex $x_i^2$ represent the positive occurrence in a 2-clause. Therefore, we refer to them as \emph{literal vertices}.
Furthermore, note  that a vertex of $\hat{V_i}$ is adjacent to $A_i$ if and only if it is adjacent to $B_i$; and being private, these two nodes will remain false twins also in $G$. We will exploit this symmetry to simplify the case analysis. %

To conclude the construction, we show how to encode each clause $C_{\alpha}$, for $\alpha=1,\ldots,m$.
If $C_{\alpha}$ is a $3$-clause, then it is monotone positive, i.e., $C_{\alpha} = ( x_i \vee x_j \vee x_k )$ for some $i,j,k\in \{1,\ldots, n\}$, and all that is needed is to introduce the three edges $(x_i^1, x_j^1)$, $(x_j^1, x_k^1)$, $(x_k^1, x_i^1)$. These three edges comprise the clause gadget (see \autoref{fig:construction}). %

If $C_{\alpha}$ is a 2-clause, say $C_{\alpha}=(x_i^r \vee x_j^s)$ with $i,j\in \{1,\ldots, n\}$ and $ r, s \in \{2,N\}$, then we introduce a public black vertex $L_{i,j}^{\alpha}$ with a private black neighbor $p_{i,j}^{\alpha}$ and we add the four edges $(x_i^r, x_j^s)$, $(x_i^r, L_{i,j}^{\alpha})$, $(x_j^s, L_{i,j}^{\alpha})$ and $(L_{i,j}^{\alpha}, p_{i,j}^{\alpha})$. These four edges together with the two vertices added comprise the clause gadget (see \autoref{fig:construction2clause}).

\begin{figure}[h]
        \begin{center}
        \begin{tikzpicture}[transform shape, scale=0.85]
          \node[vertex,draw,white,minimum size=0.7cm,fill=black] (A1) at (-1,0.5) {$A_i$};
            \node[vertex,draw,white,minimum size=0.7cm,fill=black] (B1) at (-1,-0.5) {$B_i$};
            \node[vertex,draw,white,minimum size=0.7cm,fill=black] (C1) at (-2,0) {$C_i$};
            \node[vertex,draw,minimum size=0.7cm,fill=white!40] (x11) at (0,1) {$x_i^1$};
            \node[vertex,draw,minimum size=0.7cm,fill=white!40] (x12) at (0,0) {$x_i^2$};
            \node[vertex,draw,minimum size=0.7cm,fill=white!40] (x1n) at (0,-1) {$x_i^N$};
            \draw (C1) -- (A1) -- (x11) -- (x12) -- (A1) -- (x1n) -- (B1) -- (x12) -- (x11) -- (B1) -- (C1) -- (x12);
            \draw (x1n) edge[bend left] (C1) (C1) edge [bend left] (x11);

            \begin{scope}[xshift=5cm]
            \node[vertex,draw,white,minimum size=0.7cm,fill=black] (A2) at (0,0.5) {$A_j$};
            \node[vertex,draw,white,minimum size=0.7cm,fill=black] (B2) at (0,-0.5) {$B_j$};
            \node[vertex,draw,white,minimum size=0.7cm,fill=black] (C2) at (1,0) {$C_j$};
            \node[vertex,draw,minimum size=0.7cm,fill=white!40] (x21) at (-1,1) {$x_j^1$};
            \node[vertex,draw,minimum size=0.7cm,fill=white!40] (x22) at (-1,0) {$x_j^2$};
            \node[vertex,draw,minimum size=0.7cm,fill=white!40] (x2n) at (-1,-1) {$x_j^N$};
            \draw (C2) -- (A2) -- (x21) -- (x22) -- (A2) -- (x2n) -- (B2) -- (x22) -- (x21) -- (B2) -- (C2) -- (x22);
            \draw (x2n) edge[bend right] (C2) (C2) edge [bend right] (x21);
            \end{scope}

            \node[vertex,draw,white,minimum size=0.7cm,fill=black] (x32) at (2,-2) {$L_{i,j}^{\alpha}$};
             \node[vertex,draw,white,minimum size=0.7cm,fill=black] (P) at (2,-3) {$p_{i,j}^{\alpha}$};
            
            \draw (x32) -- (P);
            \draw (x12) -- (x2n) -- (x32) -- (x12);
    \end{tikzpicture}
    \caption{Gadget for a $2$-clause $\hat{C_{\alpha}}$ of the form $C_{\alpha}=(x_i \vee \overline{x}_j)$. %
    }\label{fig:construction2clause}
        \end{center}
        \end{figure}

The description of the reduction is complete. Clearly, $G_{\Psi}$ has at most $6n +2m$ vertices and at most $12n + 4m$ edges.
We next introduce a few notions to ease the proof that $G_{\Psi}$ is a colored unit $2$-interval graph if and only if $\Psi$ is satisfiable. 

\begin{definition}\label{definition:split}
Given a colored graph $(G,\gamma)$, we say that a pair $(S,f)$ formed by a graph $S$ and a function $f:V(S)\mapsto V(G)$ is a split of $(G,\gamma)$ if $f$ satisfies the following conditions:
  \begin{itemize}
     \item $|f^{-1}(v)| = 1$ for every $v\in V(G)$ with $\gamma(v)=\BLACK$. 
     \item $|f^{-1}(v)| = 2$ for every $v\in V(G)$ with $\gamma(v)=\WHITE$.
     \item For every vertex $v$ of $G$, $f^{-1}(v)$ is an independent set in $S$.
     \item For every edge $(s,t)$ of $S$, $(f(s),f(t))$ is an edge of $G$.
     \item For every edge $(u,v)$ of $G$, there exist two vertices $s$ and $t$ in $f^{-1}(\{u,v\})$ such that $(s,t)$ is an edge of $S$.
  \end{itemize}
\end{definition}

\begin{definition}
    We define the family of splits of $G$ that lead to a unit 1-interval graph as $\mathcal{S}_\mathcal{U}(G) := \{(S,f) \mid (S,f) \text{ is a split of }$G$ \allowbreak \text{ and $S$ is a unit 1-interval graph} \}$. %
\end{definition}

The next lemma shows how a split $(S,f)$ of a colored graph $G$ can be used to certify that $G$ is a colored unit 2-interval graph. %
This has the advantage of being a truly combinatorial certificate, whereas the number of interval families representing a same graph is infinite with the power of the continuous as soon as at least one exists. 
Trotter and Harary~\cite{DBLP:journals/jgt/TrotterH79} have already studied vertex splitting in the context of turning a graph into an interval graph.%

\begin{lemmarep}\label{lemma:equivalence}
    A colored graph $(G, \gamma)$ is a colored unit 2-interval graph if and only if the family $\mathcal{S}_\mathcal{U}(G)$ is not empty.
\end{lemmarep}
\begin{proof}

Suppose that $G$ is a colored unit $2$-interval graph with $V =V_{\WHITE} \cup  V_{\BLACK}$.
Then, by assumption, there exists a collection of unit $2$-intervals
  $\mathbf{D}_{\WHITE} = \{(I_1(v), I_2(v)) \mid v \in V_{\WHITE}\}$ and a collection of unit intervals
  $\mathbf{I}_{\BLACK} = \{I_1(v) \mid v \in V_{\BLACK}\}$
  such that
  $G \simeq \Omega\left(\mathbf{D}_{\WHITE} \cup \mathbf{I}_{\BLACK}\right)$. 
  Let $\mathcal{F}$ be the family of 1-intervals formed by the ground set of $\mathbf{D}_{\WHITE} \cup \mathbf{I}_{\BLACK}$. Let $S$ be the 1-interval graph defined as the intersection graph of the family $\mathcal{F}$, i.e., $S \simeq \Omega(\mathcal{F})$.
 Consider the function $f:V(S)\mapsto V(G)$ such that: 
 \begin{itemize}
    \item For every $I_1(v) \in \mathbf{D}_{\BLACK}$, $f(I_1(v))= v$.  
     \item For every pair $I_1(v), I_2(v) \in \mathbf{D}_{\WHITE}$, $f(I_1(v))=f(I_2(v)) = v$. 
 \end{itemize}
  By construction, $f$ satisfies all the conditions in \autoref{definition:split}. Indeed, the first three conditions follow directly by definition, while the last two conditions follow because if we have an edge $(I_j(u), I_k(v))$ in $S$, for some $j,k \in \{1,2\}$, this is equivalent to the 2-intervals associated to vertices $u$ and $v$ of $G$ intersecting, so there is an edge $(u,v)$ in $G$.
  Therefore, $(S,f)$ is a split of $(G, \gamma)$.
  
  Conversely, suppose that there exists a split $(S,f)$ of $(G, \gamma)$ that satisfies the property of being a unit interval graph. Then, there exists a collection of unit intervals 
  $\mathbf{I} = \{I_1(s) \mid s \in V(S)\}$
  such that
  $S \simeq \Omega\left(\mathbf{I}\right)$.
  Since $(S,f)$ is a split of $(G,\gamma)$, we know that there exists a map $f:V(S)\mapsto V(G)$ satisfying the conditions in \autoref{definition:split}. We construct a colored unit 2-interval representation of $G$, i.e., a collection of unit 2-intervals $\mathbf{D}_{\WHITE} = \{(I_1(v), I_2(v)) \mid v \in V_{\WHITE}\}$ and a collection of unit 1-intervals
  $\mathbf{I}_{\BLACK} = \{I_1(v) \mid v \in V_{\BLACK}\}$, as follows:
  \begin{itemize}
      \item For every $v\in V(G)$ with $\gamma(v)=\BLACK$, we let $I_1(v)= I_1(s)$, where $s=f^{-1}(v)$.
      \item For every $v\in V(G)$ with $\gamma(v)=\WHITE$, we let $I_1(v)= I_1(s)$ and $I_2(v)=I_1(t)$, where $\{s,t\}=f^{-1}(v)$.
  \end{itemize}
  By construction, this is a colored unit 2-interval representation of $G$, as the last two conditions of $f$ ensure that we preserve the same edges.
\end{proof}

We can now proceed to study the shape of the possible splits $(S,f) \in \Ufamily$.
Let $(S,f)$ be a split of a graph $G$. For every vertex $v \in V(G)$, we call each element of the set $\g(v)$ a representative of $v$. In particular, if $v$ is a white node, we denote its two representatives in $V(S)$ by $\g_1(v)$ and $\g_2(v)$. For simplicity, when we refer to an arbitrary representative of a vertex or to the unique representative of a black vertex, we abuse notation and denote it by its label in $V(G)$. Furthermore, given an edge $(u,v)\in G$, we call the edge $(s,t)\in S$, a representative of $(u,v)$ if $s\in \g(u)$ and $t \in \g(v)$.
Furthermore, given a split $(S,f)$ of the graph $\Graph$, we denote by $S[\hat{V_i}]$ the subgraph of $S$ induced by the vertices of the variable gadget $\hat{V_i}$ (i.e., vertices $A_i, B_i, C_i$, $\g_1(x_i^N), \g_1(x_i^1)$, $\g_1(x_i^2)$, $\g_2(x_i^N), \g_2(x_i^1)$ and $\g_2(x_i^2)$).
Finally, we say that a representative of a literal vertex is an \emph{isolated vertex} if it is not adjacent to any of the private vertices of its variable gadget (i.e., it is not adjacent to $A_i, B_i$ or $C_i$). 

\begin{claim}\label{claim:noduplicates}
 Let $(S,f)$ be an arbitrary graph in $\Ufamily$.
Then, none of the black vertices of $S[\hat{V_i}]$ can be adjacent to both representatives of a literal vertex. Furthermore, if a black vertex is adjacent to a representative of $x_i^1$ and to a representative of $x_i^2$, these two representatives must be adjacent to each other.
\end{claim}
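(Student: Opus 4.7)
My approach is to exploit claw-freeness of unit interval graphs: if $S$ is a unit $1$-interval graph, then $S$ contains no induced $K_{1,3}$. I would derive both conclusions by exhibiting an induced claw centered at the black vertex $b$ in question whenever the conclusion fails. The structural observation that makes the argument short is that, in the variable gadget $\hat{V_i}$, the literal vertex $x_i^N$ is non-adjacent in $\Graph$ to each of $x_i^1$ and $x_i^2$, while each of $A_i, B_i, C_i$ is adjacent to all three literal vertices. Consequently, for every black vertex $b$ of the gadget, a representative of every literal vertex must appear in $N_S(b)$ by condition~5 of \autoref{definition:split}, and representatives of literal vertices that are non-adjacent in $\Graph$ are themselves non-adjacent in $S$ by condition~4.

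For the first part, I would assume for contradiction that $b \in \{A_i, B_i, C_i\}$ is adjacent in $S$ to both $\g_1(x_i^r)$ and $\g_2(x_i^r)$ for some literal vertex $x_i^r$. I would then pick $r' \in \{1, 2, N\}$ so that $x_i^r$ and $x_i^{r'}$ are non-adjacent in $\Graph$ (taking $r' = N$ when $r \in \{1,2\}$, and $r' \in \{1,2\}$ when $r = N$) and extract a representative $u \in \g(x_i^{r'})$ with $u \sim b$ in $S$, using condition~5 applied to the edge $(b, x_i^{r'}) \in E(\Graph)$. Combined with the non-edge between $\g_1(x_i^r)$ and $\g_2(x_i^r)$ (condition~3), this yields an induced $K_{1,3}$ in $S$ centered at $b$ with leaves $\g_1(x_i^r), \g_2(x_i^r), u$, contradicting claw-freeness of $S$.

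For the second part, I would again assume for contradiction that $b$ is adjacent in $S$ to $s \in \g(x_i^1)$ and $t \in \g(x_i^2)$ with $s \not\sim t$. Taking a representative $u \in \g(x_i^N)$ adjacent to $b$ in $S$ (which exists by condition~5 applied to $(b, x_i^N) \in E(\Graph)$), the non-adjacencies $(x_i^N, x_i^1), (x_i^N, x_i^2) \notin E(\Graph)$ force $u \not\sim s$ and $u \not\sim t$ by condition~4, so $\{s, t, u\}$ is independent in $S$ and the induced claw at $b$ provides the contradiction. The whole argument is essentially captured by the observation that $x_i^N$ serves as a witness non-neighbor of $x_i^1$ and $x_i^2$ inside the gadget, so I do not foresee any real obstacle.
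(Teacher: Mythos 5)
Your proof is correct and follows essentially the same route as the paper: in each case you exhibit an induced $K_{1,3}$ centered at the black vertex, using a representative of a non-adjacent literal vertex (forced into $N_S(b)$ by the split conditions, with $x_i^N$ serving as the witness non-neighbor of $x_i^1,x_i^2$) as the third leaf, contradicting claw-freeness of unit interval graphs. The only difference is that you spell out the relevant conditions of \autoref{definition:split} more explicitly than the paper does.
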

\begin{claimproof}
    Suppose that the two representatives of a literal vertex are adjacent to the same black vertex. If the literal vertex is $x_i^1$ or $x_i^2$, the black vertex would be a center of a $K_{1,3}$ with these two representatives plus a representative of the vertex $x_i^N$ as leaves. If the literal vertex is $x_i^N$, the black vertex would be a center of a $K_{1,3}$ with the two representatives of $x_i^N$ and one of $x_i^1$ or $x_i^2$ as leaves. Since the graph $K_{1,3}$ is a forbidden induced subgraph for unit 1-interval graphs, this contradicts the fact that $S$ belongs to $\Ufamily$.
    Finally, if a black vertex is adjacent to a representative of $x_i^1$ and to a representative of $x_i^2$ which are not adjacent, the black vertex would be a center of a $K_{1,3}$ with these two representatives plus a representative of $x_i^N$ as leaves. 
\end{claimproof}

\begin{claim}\label{splitsneeded}
 Let $(S,f)$ be an arbitrary split in $\Ufamily$.
    Then, %
    for every variable $x_i$ with $i\in\{1,\dots,n\}$, the subgraph $S[\hat{V_i}]$ satisfies at least one of the following two conditions, up to symmetry:
    \begin{enumerate}
        \item\label{splitsneeded:case1} The vertex $\g_1(x_i^N)$ is adjacent to $A_i$ and the vertex $\g_2(x_i^N)$ is adjacent to $B_i$.
        \item\label{splitsneeded:case2} The vertices $\g_1(x_i^1)$ and $\g_1(x_i^2)$ are adjacent to each other and to $A_i$, and the vertices $\g_2(x_i^1)$ and $\g_2(x_i^2)$ are adjacent to each other and to $B_i$. 
    \end{enumerate}
\end{claim}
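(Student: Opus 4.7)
The plan is to combine Claim~\ref{claim:noduplicates} with the claw-freeness and chordality of unit $1$-interval graphs. By Claim~\ref{claim:noduplicates}, each of $A_i$, $B_i$, $C_i$ is adjacent in $S$ to exactly one representative of $x_i^k$ for every $k\in\{1,2,N\}$; I denote these unique neighbors by $a_k$, $b_k$, $c_k$ respectively. Furthermore, the $\Graph$-edges $(A_i,C_i)$ and $(B_i,C_i)$ have both endpoints black and so must be realized directly as $A_i\sim C_i$ and $B_i\sim C_i$ in $S$.

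The first step is to apply claw-freeness at $A_i$ to the triple $\{a_1,a_2,a_N\}\subseteq N_S(A_i)$. Since $x_i^1\not\sim x_i^N$ and $x_i^2\not\sim x_i^N$ in $\Graph$, neither $(a_1,a_N)$ nor $(a_2,a_N)$ can be an edge of $S$, so avoiding a claw centered at $A_i$ forces $a_1\sim a_2$ in $S$. The same reasoning at $B_i$ and $C_i$ gives $b_1\sim b_2$ and $c_1\sim c_2$. Second, applying claw-freeness at $C_i$ to $\{A_i,B_i,c_N\}$ and using $A_i\not\sim B_i$ (not an edge of $\Graph$), one of $A_i\sim c_N$ or $B_i\sim c_N$ must hold, which by uniqueness of $a_N$ and $b_N$ means $c_N\in\{a_N,b_N\}$.

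I then split on whether $a_N\neq b_N$. In the first case, $\{a_N,b_N\}$ are the two distinct representatives of $x_i^N$, and relabeling $\g_1(x_i^N):=a_N$ and $\g_2(x_i^N):=b_N$ immediately yields Condition~\ref{splitsneeded:case1}. In the second case, $a_N=b_N=c_N=:r$; let $r^{\prime}$ denote the other representative of $x_i^N$. The crux of the proof -- and what I expect to be the main obstacle -- is to rule out $a_1=b_1$ (and symmetrically $a_2=b_2$). Assume for contradiction $a_1=b_1=:p$. Then both $p$ and $r$ are adjacent in $S$ to $A_i$ and to $B_i$, whereas $p\not\sim r$ (because $x_i^1\not\sim x_i^N$ in $\Graph$) and $A_i\not\sim B_i$. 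Hence $r{-}A_i{-}p{-}B_i{-}r$ is an induced $4$-cycle in $S$, contradicting chordality of the unit $1$-interval graph $S$. The same argument forbids $a_2=b_2$, so $\{a_1,b_1\}$ and $\{a_2,b_2\}$ are each pairs of distinct representatives. Relabeling $\g_1(x_i^k):=a_k$ and $\g_2(x_i^k):=b_k$ for $k\in\{1,2\}$ and invoking the already established $a_1\sim a_2$ and $b_1\sim b_2$ yields Condition~\ref{splitsneeded:case2}.
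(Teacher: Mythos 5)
Your proof is correct and takes essentially the same route as the paper's: the forced adjacencies $a_1\sim a_2$ and $b_1\sim b_2$ are exactly the content of the second part of Claim~\ref{claim:noduplicates} (the same claw at $A_i$, resp.\ $B_i$, with a representative of $x_i^N$ as third leaf), and your induced $4$-cycle $r$--$A_i$--$p$--$B_i$ is the same $C_4$ the paper uses to rule out a representative of $x_i^N$ and a representative of $x_i^1$ (or $x_i^2$) simultaneously covering both $A_i$ and $B_i$. The only differences are organizational: you do a direct case split on whether $a_N=b_N$ instead of the paper's two contrapositive directions, and your step involving $C_i$ and $c_N$ is not actually needed.
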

\begin{claimproof}
    By the properties of $f$, for every edge $(u,v)\in \Graph$, there exist elements $s,t \in V(S)$ with $\g(u)=s$ and $\g(v) =t$ such that $(s,t)$ is an edge in $S$.  
    
    Suppose condition~\ref{splitsneeded:case1} does not hold, i.e., one of the representatives of $x_i^N$, say $\g_1(x_i^N)$, is adjacent to both $A_i$ and $B_i$. We will show that if condition~\ref{splitsneeded:case2} does not hold either, $S$ cannot be a unit 1-interval graph. 
    Assume that one of the representatives of $x_i^1$ or $x_i^2$, say $\g_1(x_i^1)$ (resp. $\g_1(x_i^2)$), is adjacent to both $A_i$ and $B_i$. Then, $S$ contains an induced cycle of length four: $\left(\g_1(x_i^N), B_i, \g_1(x_i^1), A_i\right)$ (resp. $\left(\g_1(x_i^N), B_i, \g_1(x_i^2), A_i\right)$). This is a forbidden induced subgraph for unit 1-interval graphs, so it contradicts the hypothesis.
    Thus, it follows that, up to symmetry, vertices $\g_1(x_i^1)$ and $\g_1(x_i^2)$ need to be adjacent to $A_i$, and vertices $\g_2(x_i^1)$ and $\g_2(x_i^2)$, to $B_i$. 
    Finally, by \autoref{claim:noduplicates}, $\g_1(x_i^1)$ and $\g_1(x_i^2)$ need to be adjacent to each other, so condition~\ref{splitsneeded:case2} must hold.
    
    Conversely, suppose condition~\ref{splitsneeded:case2} does not hold, i.e., at least one of the representatives of $x_i^1$ or $x_i^2$, say $\g_1(x_i^1)$ w.l.o.g., is adjacent to both $A_i$ and $B_i$. We will see that condition~\ref{splitsneeded:case1} must hold in order for $S$ to be a unit 1-interval graph. Indeed, if a single representative of $x_i^N$, say $\g_1(x_i^N)$, is adjacent to both $A_i$ and $B_i$, then $S$ contains an induced cycle of size four: $\left(\g_1(x_i^N), B_i, \g_1(x_i^1), A_i\right)$. Therefore, one representative of $x_i^N$ must be adjacent to $A_i$ and the other, to $B_i$.
\end{claimproof}

  The previous claim implies that there are four possible configuration of $S[\hat{V_i}]$ such that it does not contain any induced cycles of length greater or equal to 4.
  
  \begin{lemma}\label{4options}
  Let $(S,f)$ be a split of $\Graph$ such that $S[\hat{V_i}]$ does not contain any induced cycles of length greater or equal to 4. Then, $S$ satisfies one of the following conditions:
    \begin{enumerate}
        \item The vertex $\g_1(x_i^N)$ is adjacent to $A_i$ and the vertex $\g_2(x_i^N)$ is adjacent to $B_i$, while for the rest of the literal vertices, there exists an element in the image via $\g$ that is an isolated vertex. \label{case1proof}
        
        \item The vertices $\g_1(x_i^1)$ and $\g_1(x_i^2)$ are adjacent to each other and to $A_i$, and the vertices $\g_2(x_i^1)$ and $\g_2(x_i^2)$ are adjacent to each other and to $B_i$, while $\g(x_i^N)$ contains an isolated vertex. \label{case2proof}
        
        \item The images of $x_i^1$ and $x_i^2$ via $\g$ are as in Case~\ref{case1proof} and $\g(x_i^N$) is as in Case~\ref{case2proof} (see the graph in \autoref{fig:split}). \label{case:case3proof}
        
        \item Either the image of $x_i^1$ or the image of $x_i^2$ via $\g$ is as in Case~\ref{case1proof} (w.l.o.g., assume it is $\g(x_i^1)$) so that both representatives of $x_i^1$ are adjacent to the non-isolated representative of $x_i^2$; and $\g(x_i^N)$ is as in Case~\ref{case2proof}. %
        \label{case:case4proof}
    \end{enumerate}
\end{lemma}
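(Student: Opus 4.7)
The strategy is to take the two sub-cases (up to symmetry) already supplied by Claim~\ref{splitsneeded} and to refine each of them by enumerating, for every literal vertex of $\hat{V_i}$, the possible adjacency patterns of its two representatives to the private vertices $A_i, B_i, C_i$; then to rule out every pattern that would create an induced cycle of length at least $4$ in $S[\hat{V_i}]$, contradicting the hypothesis. The surviving patterns will regroup into the four stated cases.

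The first step is a combinatorial reduction. For each literal vertex $v\in\{x_i^1,x_i^2,x_i^N\}$, the three edges $v\sim A_i$, $v\sim B_i$, $v\sim C_i$ of $G_\Psi$ must be realized by the two representatives of $v$ in $S$, and by Claim~\ref{claim:noduplicates} no private vertex is adjacent to both representatives of $v$. Hence $\{A_i,B_i,C_i\}$ is partitioned between $\g_1(v)$ and $\g_2(v)$, leaving (up to a relabelling of the two representatives) only the four patterns $(\emptyset,\{A_i,B_i,C_i\})$, $(\{A_i\},\{B_i,C_i\})$, $(\{B_i\},\{A_i,C_i\})$ and $(\{C_i\},\{A_i,B_i\})$ per literal vertex.

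I would then branch on Claim~\ref{splitsneeded}. In its first configuration ($\g_1(x_i^N)\sim A_i$ and $\g_2(x_i^N)\sim B_i$), I analyze the patterns of $x_i^1$ and of $x_i^2$ separately. For every pattern in which both representatives are \emph{active} (hit a private vertex), I would exhibit a candidate induced $4$-cycle through one or both representatives of $x_i^N$, a private vertex, and a representative of the analyzed literal, using Claim~\ref{claim:noduplicates} together with the absence of the edges $x_i^N x_i^1$ and $x_i^N x_i^2$ in $G_\Psi$ to certify that the cycle is chordless. The only ``both active'' pattern surviving this test turns out to be the mirror-split $(\{A_i\},\{B_i,C_i\})$, so for each of $x_i^1,x_i^2$ only the isolated pattern and the mirror-split pattern remain. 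Combining the two choices produces Case~\ref{case1proof} (both isolated) and Case~\ref{case:case3proof} (both mirror-split); genuinely mixed combinations are eliminated by an additional $4$-cycle exploiting the edge $x_i^1\sim x_i^2$ of $G_\Psi$.

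Symmetrically, under the second configuration of Claim~\ref{splitsneeded} (triangle structure of $x_i^1,x_i^2$ around $A_i$ and $B_i$), I apply the same $4$-cycle obstruction to the patterns of $x_i^N$; the triangle edges this time force one representative of $x_i^N$ to be isolated, giving Case~\ref{case2proof}. The exceptional Case~\ref{case:case4proof} arises from the asymmetric residual situation in which only one of $x_i^1,x_i^2$ keeps the triangle configuration while the other is in the mirror-split pattern with one isolated representative, and the edge $x_i^1\sim x_i^2$ of $G_\Psi$ then forces both representatives of the split literal to be incident to the unique non-isolated representative of the triangle-configured one, precisely as written. I expect the main obstacle to be bookkeeping rather than any single deep step: many candidate $4$-cycles look locally harmless until the adjacencies forced by Claim~\ref{splitsneeded} and by the edge $x_i^1\sim x_i^2$ are plugged in, and cleanly separating Case~\ref{case2proof} from Case~\ref{case:case4proof} demands careful tracking of which representative of $x_i^2$ (if any) the representatives of $x_i^1$ meet.
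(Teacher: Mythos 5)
Your overall starting point (refine Claim~\ref{splitsneeded} using Claim~\ref{claim:noduplicates}) matches the paper, but two of your key elimination claims are false, and they are exactly what makes your enumeration fail to reproduce the statement. You assert that, under condition~\ref{splitsneeded:case1} of Claim~\ref{splitsneeded}, the ``mixed'' situation in which exactly one of $x_i^1,x_i^2$ has an isolated representative can be killed by an induced $4$-cycle exploiting the edge $(x_i^1,x_i^2)$. That mixed situation is precisely Case~\ref{case:case4proof} of the lemma (see \autoref{fig:split2}): take $\g_1(x_i^N)\sim A_i$, $\g_2(x_i^N)\sim B_i$, one representative $r$ of $x_i^2$ adjacent to $A_i,B_i,C_i$ (the other isolated), and $x_i^1$ split between $A_i$ and $B_i$ with both representatives adjacent to $r$ (as forced by Claim~\ref{claim:noduplicates}). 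One checks directly that this graph contains no induced cycle of length $\geq 4$: every candidate $4$-cycle through $A_i$ and $B_i$ must use $C_i$ or $r$ as the fourth vertex, and the edges $C_i r$, $A_i C_i$, $B_i C_i$ always provide a chord. So no $C_4$-argument can eliminate it; the paper only rules this configuration out later, in Claim~\ref{notcase4}, via an induced \emph{net} built with external neighbors --- information that is simply not available under the hypothesis of the lemma. Consequently Case~\ref{case:case4proof} must survive your condition-\ref{splitsneeded:case1} branch, whereas your plan deletes it there and tries to recover it inside the condition-\ref{splitsneeded:case2} branch, which does not match its actual structure ($x_i^N$ is split between $A_i$ and $B_i$ in Case~\ref{case:case4proof}, and condition~\ref{splitsneeded:case2} requires \emph{both} $x_i^1$ and $x_i^2$ in the triangle configuration, which fails when one of them has an isolated representative).

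The symmetric claim is also wrong as stated: under condition~\ref{splitsneeded:case2}, the triangle structure does \emph{not} force a representative of $x_i^N$ to be isolated --- Case~\ref{case:case3proof} (\autoref{fig:split}) has condition~\ref{splitsneeded:case2} together with $x_i^N$ split between $A_i$ and $B_i$, contains no induced $C_4$, and is only excluded later by Claim~\ref{notcase3}, again via nets and external edges. (Your statement would be salvageable only in the sub-branch where condition~\ref{splitsneeded:case1} additionally fails.) The paper's own proof avoids both pitfalls by not attempting any elimination at this stage: it merely enumerates, for each of the two conditions of Claim~\ref{splitsneeded}, whether the relevant literal vertices have an isolated representative (three possibilities under condition~\ref{splitsneeded:case1}, two under condition~\ref{splitsneeded:case2}), which yields exactly the four cases, and invokes Claim~\ref{claim:noduplicates} only for the extra adjacency asserted in Case~\ref{case:case4proof}. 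You should restructure your argument the same way: the whole point of the lemma is that Cases~\ref{case:case3proof} and~\ref{case:case4proof} \emph{cannot} be excluded by forbidden-cycle reasoning alone and must be carried forward.
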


\begin{proof}
    We have already shown that one of the conditions of \autoref{splitsneeded} must hold. If condition~\ref{splitsneeded:case1} holds, then we have three possible configurations of $\g(x_i^1)$ and $\g(x_i^2)$: either both literal vertices have a representative that is isolated (Case~\ref{case1proof}), only one of them has a representative that is isolated (Case~\ref{case:case4proof}), or none of them has an isolated representative (Case~\ref{case:case3proof}). On the other hand, if condition~\ref{splitsneeded:case2} holds, the we only have two possible configurations of $\g(x_i^N)$: one representative of $x_i^N$ is isolated (Case~\ref{case2proof}), or none of them is (Case~\ref{case:case3proof}).
    Finally, note that in Case~\ref{case:case4proof}, both representatives of $x_i^1$ need to be adjacent to the non-isolated representative of $x_i^2$ by \autoref{claim:noduplicates}.

\end{proof}

The next two claims are devoted to proving that if $(S,f)$ is a split of $(G_{\Psi}, \gamma)$ contained in the family $\Ufamily$, then Cases~\ref{case:case3proof} and~\ref{case:case4proof} of \autoref{4options} are not possible.
To do so, observe that by construction, since every variable appears exactly in three clauses (twice positive and once negated), we know that in $\Graph$, the vertices $x_i^N, x_i^1$ and $x_i^2$ all have two incident edges linking them with vertices outside of the variable gadget, called \emph{external edges} in the following. 
    The neighbors outside of the variable gadget are \emph{external vertices}, and they constitute private neighbors of the vertices of the variable gadget, as it is not possible for two different vertices of the variable gadget to be incident to the same external neighbor. We will see that if $S$ is as in Case~\ref{case:case3proof} or Case~\ref{case:case4proof}, then the vertices of $S[\hat{V_i}]$ create an induced net with the external neighbors. Since nets are a forbidden induced subgraph for (unit) interval graphs, then $S$ cannot be a unit 1-interval graph. 

\begin{claim}\label{notcase3}
    Let $S$ be an arbitrary graph in $\Ufamily$. Then, for every variable $x_i$ with $i\in\{1,\dots,n\}$, the subgraph $S[\hat{V_i}]$ cannot be as in Case~\ref{case:case3proof} of \autoref{4options}.
\end{claim}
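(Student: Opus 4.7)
The plan is to derive an induced $4$-cycle in $S$ and then invoke the fact that (unit) interval graphs are chordal. By the definition of \autoref{case:case3proof} of \autoref{4options}, each of the three literal vertices $x_i^1$, $x_i^2$, $x_i^N$ has exactly one isolated representative and hence exactly one non-isolated representative. Write $s$, $t$, $u$ for the non-isolated reps of $x_i^N$, $x_i^1$, $x_i^2$, respectively, and $s_I$, $t_I$, $u_I$ for their isolated counterparts.

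The key first step is to use \autoref{definition:split} to force strong adjacencies on the non-isolated reps. For each $P \in \{A_i, B_i, C_i\}$, the edge $(x_i^N, P)$ of $\Graph$ must be covered by an edge of $S$ between some representative of $x_i^N$ and the unique representative of $P$; since $s_I$ is adjacent to none of $A_i, B_i, C_i$ by definition of being isolated, this edge can only be covered by $s$. Hence $s$ is adjacent to each of $A_i$, $B_i$, and $C_i$ in $S$; by the same argument so are $t$ and $u$.

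Next, I would record two non-adjacencies: by the construction of $\hat{V_i}$ the pair $\{A_i, B_i\}$ is not an edge of $\Graph$, so $A_i$ and $B_i$ are non-adjacent in $S$; and since $x_i^N$ and $x_i^1$ are non-adjacent in $\Graph$, the split property implies that no pair of their representatives is an edge of $S$, so in particular $s$ and $t$ are non-adjacent.

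Combining these facts, the vertices $\{s, A_i, t, B_i\}$ induce a $C_4$ in $S$: the four cycle edges $(s, A_i)$, $(A_i, t)$, $(t, B_i)$, $(B_i, s)$ are all present while both diagonals $(s, t)$ and $(A_i, B_i)$ are absent. Unit $1$-interval graphs are chordal and therefore forbid induced $C_k$ for $k \geq 4$, contradicting $S \in \Ufamily$. The only delicate point is the forced full adjacency of each non-isolated rep to all three private vertices; once that is in place the induced $C_4$ is immediate. As hinted in the paragraph preceding the claim, an induced net involving an external neighbor of a literal could also be exhibited, but the induced $C_4$ is already enough.
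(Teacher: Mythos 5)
Your argument starts from a misreading of Case~\ref{case:case3proof}. In that case \emph{no} representative of any literal vertex is isolated: as the proof of \autoref{4options} spells out, Case~\ref{case:case3proof} is precisely the situation where condition~\ref{splitsneeded:case1} of \autoref{splitsneeded} holds \emph{and} none of $x_i^1,x_i^2$ has an isolated representative (equivalently, condition~\ref{splitsneeded:case2} holds and neither representative of $x_i^N$ is isolated). Concretely, the edges towards $A_i$, $B_i$, $C_i$ are then distributed between the two representatives of each literal vertex --- e.g.\ $\g_1(x_i^1)$ is adjacent to $A_i$ (and possibly $C_i$) while $\g_2(x_i^1)$ is adjacent to $B_i$ --- so no single representative is adjacent to both $A_i$ and $B_i$, and the induced $C_4$ on $\{s,A_i,t,B_i\}$ that you construct simply is not present. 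The configuration you actually refute (each literal vertex having one isolated representative and one ``universal'' representative adjacent to all of $A_i,B_i,C_i$) is a different one, and it is already excluded by the $C_4$ argument inside the proof of \autoref{splitsneeded}.

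The deeper point you are missing is that Case~\ref{case:case3proof} cannot be ruled out by inspecting $S[\hat{V_i}]$ alone: the induced subgraph drawn in \autoref{fig:split} is by itself a perfectly valid unit interval graph. The paper's proof therefore has to bring in the \emph{external} edges --- each literal vertex has two private neighbors outside $\hat{V_i}$ because every variable occurs in exactly three clauses --- and shows that, since no representative of $x_i^1$ or $x_i^2$ is isolated, wherever these external edges attach they complete an induced net (either on $\g_2(x_i^N),B_i,\g_2(x_i^1),\g_2(x_i^2)$ together with two external neighbors, or on $C_i,A_i,\g_1(x_i^1)$ together with $B_i$, $\g_1(x_i^N)$ and an external neighbor). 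An argument of that kind is needed for the claim to go through; as written, your proof establishes nothing about Case~\ref{case:case3proof}.
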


\begin{claimproof}
     Suppose that $S[\hat{V_i}]$ is as in Case~\ref{case:case3proof} of \autoref{4options}, i.e., as in \autoref{fig:split} (where $C_i$ could be in the neighborhood of the other representatives of the vertices, but thanks to the symmetry, these cases are equivalent). We distinguish two cases:
    \begin{itemize}
        \item The two external edges incident to $x_i^1$ and $x_i^2$ are incident to two representatives that are adjacent. Then, either $\g_1(x_i^N), A_i, \g_1(x_i^1), \g_1(x_i^2)$, a private neighbor of $\g_1(x_i^1)$ and a private neighbor of $\g_1(x_i^2)$ form a net; or $\g_2(x_i^N), B_i, \g_2(x_i^1), \g_2(x_i^2)$, a private neighbor of $\g_2(x_i^1)$ and a private neighbor of $\g_2(x_i^2)$ form a net (see the red net in \autoref{fig:split}). 
        \item Otherwise, at least one of $\g_1(x_i^1)$ or $\g_1(x_i^2)$ will be incident to an external edge.
        Then, $C_i, A_i, \g_1(x_i^1)$ or $C_i, A_i, \g_1(x_i^2)$ will create a net together with $B_i, \g_1(x_i^N)$, and the corresponding external neighbor of $\g_1(x_i^1)$ or $\g_1(x_i^2)$, respectively (see the blue net in \autoref{fig:split}). 
    \end{itemize}

        \begin{figure}[h]
        \centering
        \begin{tikzpicture}[transform shape, scale = 0.6]
          \node[vertex,draw,white,minimum size=1.3cm,fill=black] (A) at (-2,1) {$A_i$};
            \node[vertex,draw,white,minimum size=1.3cm,fill=black] (B) at (-2,-1) {$B_i$};
            \node[vertex,draw,white,minimum size=1.3cm,fill=black] (C) at (-4,0) {$C_i$};
            \node[vertex,draw,minimum size=1.3cm,fill=white!40] (x1) at (1,3.5) {$\g_1(x_i^1)$};
            \node[vertex,draw,minimum size=1.3cm,fill=white!40] (y1) at (3,3.5) {$x_l^1$};
            \node[vertex,draw,minimum size=1.3cm,fill=white!40] (x2) at (1,2) {$\g_1(x_i^2)$};
            \node[vertex,draw,minimum size=1.3cm,fill=white!40] (x12) at (1,0.5) {$\g_2(x_i^1)$};
            \node[vertex,draw,minimum size=1.3cm,fill=white!40] (y12) at (3,0.5) {$x_j^1$};
            \node[vertex,draw,minimum size=1.3cm,fill=white!40] (x22) at (1,-1) {$\g_2(x_i^2)$};
            \node[vertex,draw,minimum size=1.3cm,fill=white!40] (y22) at (3,-1) {$x_k^1$};
            \node[vertex,draw,minimum size=1.3cm,fill=white!40] (xn) at (1,-2.5) {$\g_1(x_i^N)$};
            \node[vertex,draw,minimum size=1.3cm,fill=white!40] (xn2) at (1,-4) {$\g_2(x_i^N)$};
            \draw (C) -- (A) -- (x1) -- (x2) -- (A) -- (xn);
            \draw (C) -- (B) -- (x12) -- (x22) -- (B) -- (xn2);
            \draw (xn2) edge[bend left] (C) (C) edge [bend left] (x1);
            \draw (C) edge [bend left] (x2);
            \draw[blue, line width= 0.7mm] (C) edge [bend left] (x1);
            \draw[blue, line width= 0.7mm] (A) -- (x1);
            \draw[red, line width= 0.7mm] (x12) -- (y12);
            \draw[red, line width= 0.7mm] (x22) -- (y22);
            \draw[red, line width= 0.7mm] (xn2)-- (B) -- (x12) --(x22) --(B);
            \draw[blue, line width= 0.7mm] (x1) -- (y1);
            \draw[blue, line width= 0.7mm] (B) -- (C)-- (A) --(xn); 
    \end{tikzpicture}
    \caption{Configuration of $S[\hat{V_i}]$ described in Case~\ref{case:case3proof} of \autoref{4options}. In red, the net created if both $\g_2(x_i^1)$ and $\g_2(x_i^2)$ have an external neighbor. In blue, the net created if $\g_1(x_i^1)$ has an external neighbor.}\label{fig:split}
       
        \end{figure}
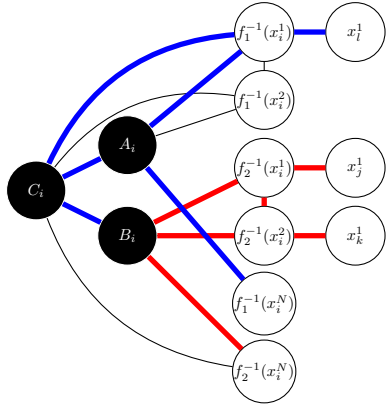

         In both cases, we have a forbidden induced subgraph for (unit) interval graphs, contradicting the hypothesis that $S$ is a unit interval graph.

\end{claimproof}

\begin{claimrep}\label{notcase4}
    Let $(S,f)$ be an arbitrary split in $\Ufamily$. Then, for every variable $x_i$ with $i\in\{1,\dots,n\}$, the subgraph $S[\hat{V_i}]$ cannot be as in Case~\ref{case:case4proof} of \autoref{4options}.
\end{claimrep}
\begin{claimproof}
     Suppose that $S[\hat{V_i}]$ is as in Case~\ref{case:case4proof} of \autoref{4options}, i.e., as in \autoref{fig:split2}. By construction, $x_i^2$ and at least one of $\g_1(x_i^1)$ or $ \g_2(x_i^1)$ have an external neighbor. We distinguish two cases:
    \begin{itemize}
        \item The vertex $\g_1(x_i^1)$ has an external neighbor. Then, vertices $A_i, \g_1(x_i^1), x_i^2, x_i^N$, and the external neighbors of $x_i^2$ and $\g_1(x_i^1)$ form a net (see the red net in \autoref{fig:split2}).
        \item The vertex $\g_2(x_i^1)$ has an external neighbor. Then, vertices $B_i, \g_2(x_i^1), x_i^2, x_i^N$, and the external neighbors of $x_i^2$ and $\g_2(x_i^1)$ form a net (see the blue net in \autoref{fig:split2}).
    \end{itemize}

     \begin{figure}[h]
        \centering
        \begin{tikzpicture}[transform shape, scale=0.6]
          \node[vertex,draw,white,minimum size=1.3cm,fill=black] (A) at (-2,1.5) {$A_i$};
            \node[vertex,draw,white,minimum size=1.3cm,fill=black] (B) at (-2,-0.5) {$B_i$};
            \node[vertex,draw,white,minimum size=1.3cm,fill=black] (C) at (-4,0.75) {$C_i$};
            \node[vertex,draw,minimum size=1.3cm,fill=white!40] (x1) at (1,4) {$\g_1(x_i^1)$};
            \node[vertex,draw,minimum size=1.3cm,fill=white!40] (y1) at (3,4) {$x_j^1$};
            \node[vertex,draw,minimum size=1.3cm,fill=white!40] (y2) at (3,1) {$x_k^1$};
                        \node[vertex,draw,minimum size=1.3cm,fill=white!40] (y3) at (3,-0.5) {$x_l^1$};

            \node[vertex,draw,minimum size=1.3cm,fill=white!40] (x12) at (1,1) {$\g_2(x_i^1)$};
            \node[vertex,draw,minimum size=1.3cm,fill=white!40] (x22) at (1,-0.5) {$x_i^2$};
            \node[vertex,draw,minimum size=1.3cm,fill=white!40] (xn) at (1,-2) {$\g_1(x_i^N)$};
            \node[vertex,draw,minimum size=1.3cm,fill=white!40] (xn2) at (1,-3.5) {$\g_2(x_i^N)$};
            \draw[red, line width= 0.7mm] (x1) -- (y1);
            \draw[blue, line width= 0.7mm] (x12) -- (y2);
            \draw[red, line width= 0.7mm] (x22) -- (y3);
            \draw[blue, line width= 0.7mm, opacity=0.5] (x22) -- (y3);

            \draw[red, line width= 0.7mm] (x22) edge[bend right] (x1);
 \draw[red, line width= 0.7mm] (x22) -- (A);
            \draw[red, line width= 0.7mm] (x1) -- (A) -- (xn) -- (A) -- (x1);
            
            \draw (C) -- (A) -- (x1)  -- (A) -- (xn);
            \draw (C) -- (B);
            \draw[blue, line width= 0.7mm] (B) -- (x12) -- (x22) -- (B) -- (xn2);
            \draw (xn2) edge[bend left] (C) (C) edge [bend left] (x1);
            \draw (A) -- (x22);
    \end{tikzpicture}
    \caption{Configuration of $S[\hat{V_i}]$ described in Case~\ref{case:case4proof} of \autoref{4options}. In red, the net created if $\g_1(x_i^1)$ has an external neighbor, and in blue, the net created if $\g_2(x_i^1)$ has an external neighbor (edge $(x_i^2,x_l^1)$ is part of both nets and is depicted in purple).}\label{fig:split2}
        \end{figure}
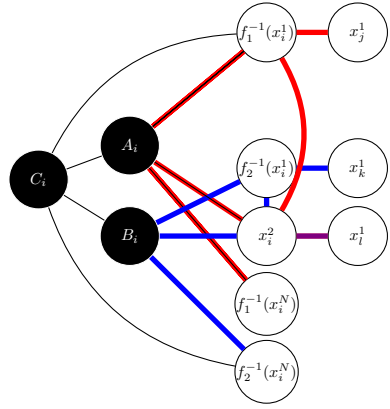

         In both cases, we have a forbidden induced subgraph for (unit) interval graphs, contradicting the hypothesis that $S$ is a unit interval graph.
\end{claimproof}

The proof of \autoref{notcase4} uses similar arguments to that of \autoref{notcase3} and is thus omitted here. Recall that in Case~\ref{case1proof} of \autoref{4options}, one of the representatives of $x_i^1$ and one of the representatives of $x_i^2$ are isolated; and in Case~\ref{case2proof} of \autoref{4options}, one of the representatives of $x_i^N$ is isolated. Therefore, we obtain the following result.

\begin{claim}\label{claimsplitG}
Let $(S,f)$ be an arbitrary split in the family $\Ufamily$. Then, for every variable $x_i$ with $i\in\{1,\dots,n\}$, the subgraph $S[\hat{V_i}]$ satisfies exactly one of the following two conditions: 
\begin{enumerate}
    \item There is a representative of $x_i^1$ and a representative of $x_i^2$ that are isolated vertices (they are either two non-adjacent vertices or they form a $K_2$).\label{case1} 
    \item One of the representatives of $x_i^N$ is an isolated vertex.\label{case2} 
\end{enumerate}
\end{claim}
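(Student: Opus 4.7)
The plan is to obtain this claim as an immediate corollary of what has already been established: \autoref{4options}, together with \autoref{notcase3} and \autoref{notcase4} rule out two of the four cases, and the remaining two cases coincide exactly with the two conditions in the statement.

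First, I would observe that since $(S,f) \in \Ufamily$, the graph $S$ is a unit $1$-interval graph, and in particular it contains no induced cycle of length at least $4$ (indeed, every chordless cycle of length $\geq 4$ is a forbidden induced subgraph for interval graphs, hence for unit interval graphs). The induced subgraph $S[\hat{V_i}]$ inherits this property, so the hypothesis of \autoref{4options} is satisfied, and one of the four configurations listed there must hold.

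Next, I would invoke \autoref{notcase3} to eliminate Case~\ref{case:case3proof} and \autoref{notcase4} to eliminate Case~\ref{case:case4proof}. This leaves only Case~\ref{case1proof} and Case~\ref{case2proof} of \autoref{4options} as possibilities. By the very statement of these two cases, Case~\ref{case1proof} asserts that $\g(x_i^1)$ and $\g(x_i^2)$ each contain an isolated representative (i.e.\ a representative with no neighbor in $\{A_i, B_i, C_i\}$), which is exactly condition~\ref{case1} of the claim; and Case~\ref{case2proof} asserts that $\g(x_i^N)$ contains an isolated representative, which is exactly condition~\ref{case2} of the claim.

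Finally, to argue the ``exactly one'' part, I would note that the two conditions are mutually exclusive: in Case~\ref{case1proof} both representatives of $x_i^N$ are adjacent to black vertices of the gadget ($\g_1(x_i^N)$ to $A_i$ and $\g_2(x_i^N)$ to $B_i$), so no representative of $x_i^N$ is isolated; conversely, in Case~\ref{case2proof} each representative of $x_i^1$ (and of $x_i^2$) is adjacent to $A_i$ or $B_i$, so no representative of $x_i^1$ or $x_i^2$ is isolated. Hence exactly one of the two conditions holds, which completes the proof. The only non-routine ingredients here are \autoref{notcase3} and \autoref{notcase4}, both already established; no further obstacle arises.
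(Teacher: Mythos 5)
Your proposal is correct and follows essentially the same route as the paper: apply \autoref{4options} (its hypothesis holds since a unit interval graph has no induced cycle of length at least $4$), discard Cases~\ref{case:case3proof} and~\ref{case:case4proof} via \autoref{notcase3} and \autoref{notcase4}, and read off the two remaining configurations as the two conditions of the claim. Your explicit mutual-exclusivity argument for the ``exactly one'' part is a small addition the paper leaves implicit, but it is accurate and consistent with the stated cases.
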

\begin{claimproof}
Combining \autoref{4options} with Claims~\ref{notcase3} and \ref{notcase4}, 
it follows that $S[\hat{V_i}]$ is either as in Case~\ref{case1proof} or as in Case~\ref{case2proof} of \autoref{4options}, which means that either one representative of each of $x_i^1$ and $x_i^2$ is isolated, or that one representative of $x_i^N$ is isolated, respectively.
These options correspond to the interval representations in \autoref{fig:true2} and \autoref{fig:false2}, respectively. The reader can check the previous assertion observing the figures, and verify that the external edges incident to each of the vertices $x_i^1, x_i^2$ and $x_i^N$ can be added in both representations, as we always have either a whole free interval (not depicted in the figures) or one extreme of the interval free for each of the vertices.
  
\end{claimproof}

 \begin{figure}
\begin{subfigure}[b]{0.45\linewidth}
     \begin{center}
    \begin{tikzpicture}
      \interv{C_i}{0.1}{0.5}{black};
      \interv{A_i}{-0.1}{0}{black};
      \interv{B_i}{1}{0}{black};
      \interv{x_i^N}{-0.7}{-0.5}{black};
      \intervsecond{x_i^N}{1.7}{-0.5}{black};
      \interv{x_i^1}{0.5}{-1}{black};
      \interv{x_i^2}{0.5}{-1.5}{black};
    \end{tikzpicture}
  \end{center}
     \caption{}
     \label{fig:true2}
  \end{subfigure}
  \hfill
  \begin{subfigure}[b]{0.45\linewidth}
     \begin{center}
    \begin{tikzpicture}
      \interv{C_i}{0.1}{0.5}{black};
      \interv{A_i}{-0.1}{0}{black};
      \interv{B_i}{1}{0}{black};
      \interv{x_i^1}{-0.7}{-0.5}{black};
      \intervsecond{x_i^1}{1.6}{-0.5}{black};
      \interv{x_i^2}{-0.6}{-1}{black};
      
      \intervsecond{x_i^2}{1.7}{-1}{black};
      \interv{x_i^N}{0.6}{-1.5}{black};
    \end{tikzpicture}
  \end{center}
    \caption{}
    \label{fig:false2}
  \end{subfigure}
  \caption{Representation of the variable gadget associated to the true value (left, \ref{fig:true2}) or false value (right, \ref{fig:false2}).}
\end{figure}

The correctness of the reduction now follows from the two lemmas below. 

\begin{lemma}\label{directmain}
  If $\Psi$ is satisfiable, then the constructed graph $\Graph=(V,E)$, $V= V_{\WHITE} \cup V_{\BLACK}$, admits a colored unit 2-interval representation.
\end{lemma}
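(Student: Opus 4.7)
The approach is to start from a satisfying assignment $\tau: \{x_1,\dots,x_n\}\to\{T,F\}$ and exhibit a concrete family of unit $1$-intervals whose intersection graph is a split $(S,f)\in\Ufamily$. By \autoref{lemma:equivalence} this certifies that $(\Graph,\gamma_{\Psi})$ is a colored unit $2$-interval graph, which is exactly what we need.

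First I would lay out the variable gadgets disjointly along the real line, one slot per variable, with a large gap (width strictly greater than $1$) between consecutive slots. For each $x_i$ with $\tau(x_i)=T$ I place the black vertices $A_i,B_i,C_i$ together with one representative each of $x_i^1,x_i^2$ and both representatives of $x_i^N$ exactly as in \autoref{fig:true2}, leaving the second representatives of $x_i^1$ and $x_i^2$ to be positioned later. For $\tau(x_i)=F$ I use the configuration of \autoref{fig:false2}, which pins both representatives of $x_i^1$ and $x_i^2$ to the slot (one on the left side, one on the right) and leaves one representative of $x_i^N$ free. In either case, one ``side'' of every literal vertex is available to be translated to an arbitrary position of the real line, because it is a unit $1$-interval independent from the others in the gadget. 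This matches the two cases of \autoref{claimsplitG} and is already a valid split of the subgraph induced by a single variable gadget.

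Next I would allocate a separate clause region (again well separated from every other region) for each clause and use the free/half-free representatives to realize the clause edges, without creating spurious adjacencies. For a $3$-clause $C_{\alpha}=(x_i\vee x_j\vee x_k)$, since $\tau$ satisfies $\Psi$, at least one of the three variables, say $x_i$, is true, so $x_i^1$ has a genuinely free representative that I drop at the center of the clause region. For each of $y\in\{j,k\}$, if $\tau(x_y)=T$ I drop the free representative of $x_y^1$ over the same point; if $\tau(x_y)=F$ I pick the left or right representative of $x_y^1$ (whichever side faces the clause region) and translate the free-endpoint side so that it barely reaches the clause region. Three unit $1$-intervals sharing a common interior point pairwise intersect, so this single placement realizes all three edges of the clause triangle simultaneously. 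For a $2$-clause $C_{\alpha}$ on $x_i^r,x_j^s$ with $r,s\in\{2,N\}$, I place $L_{i,j}^{\alpha}$ and its private pendant $p_{i,j}^{\alpha}$ in the clause region and route the corresponding free (or free-endpoint) representatives of $x_i^r$ and $x_j^s$ into it in the same way.

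Finally I would verify two things: (i) every edge of $\Graph$ is realized by at least one pair of intersecting unit intervals, and (ii) no spurious intersections arise. Point (i) is immediate from the case analysis within each gadget (edges inside variable and clause gadgets are realized by construction, as one can check directly against \autoref{fig:true2} and \autoref{fig:false2}) and from the clause placement just described. Point (ii) is the real bookkeeping, and is the main obstacle of the proof: I have to argue that the free representatives routed into a clause region do not accidentally touch intervals belonging to an unrelated gadget or to another clause region. This is controlled by (a) making the inter-slot gaps strictly larger than $1$ so unit intervals cannot reach across, (b) using each literal vertex's two ``outgoing'' sides for its at most two external edges (each literal vertex has exactly two external neighbors, so the local routing budget is tight but sufficient), and (c) orienting, for each false variable, which side of the gadget is used for which external edge so that the routes do not collide. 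Once these geometric choices are made, the resulting family is a family of unit $1$-intervals and hence its intersection graph $S$ is trivially a unit $1$-interval graph, while the map $f$ sending each interval to the vertex it represents satisfies all conditions of \autoref{definition:split}, so $(S,f)\in\Ufamily$.
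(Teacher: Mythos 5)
Your overall strategy (place each variable gadget according to its truth value so that true literals get a fully free representative, then realize clause edges using those free representatives, and finally invoke \autoref{lemma:equivalence}) is the same as the paper's, but there is a genuine gap in the step where you handle clauses containing false literals. You propose to give every clause its own region ``well separated from every other region'' and to ``translate the free-endpoint side'' of a false literal's representative ``so that it barely reaches the clause region.'' A unit interval is rigid: for a false variable $x_y$, both representatives of $x_y^1$ are anchored inside the gadget block (each must intersect $A_y$ or $B_y$ and the other intervals shown in \autoref{fig:false2}), so each can protrude only a bounded amount beyond the block's extreme. It therefore cannot reach a distant clause region at all; the clause intersection must happen \emph{at the boundary of the false variable's block}. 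This is exactly why the paper glues the free intervals of the true literals onto the extreme of the false gadget (\autoref{fig:ftt}), and, crucially, why in a $3$-clause with \emph{two} false variables it must \emph{merge} the two gadget blocks into one contiguous block with the single free interval of the true literal placed between them (\autoref{fig:fft}). Your plan of pairwise well-separated variable slots and separate clause regions makes this case unrealizable, and you never address it; the accompanying budget argument (each gadget has only two extremes, each variable lies in exactly one $3$-clause, and a false variable has at most two false literal occurrences) is also needed to check that all the gluings can coexist, which you defer to ``bookkeeping'' without resolving.

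A secondary, fixable omission: the representation obtained after gluing clause intervals onto gadget extremes is naturally \emph{proper} rather than unit (the intervals in \autoref{fig:ftt} and \autoref{fig:fft} have different lengths), and the paper finishes by converting the proper representation into a unit one via the algorithm of Bogart and West. If you insist on producing unit intervals directly, you owe an explicit geometric verification that unit lengths suffice in every merged configuration; invoking the proper-equals-unit equivalence is the cleaner way out.
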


\begin{proof}

Given a satisfying assignment $\phi$ of $\Psi$, we explain how to construct a colored unit 2-interval representation of $\Graph$, i.e., a collection of unit $2$-intervals $\mathbf{D}_{\WHITE} = \{(I_1(v), I_2(v)) \mid v \in V_{\WHITE}\}$ and
  a collection of unit 1-intervals
  $\mathbf{I}_{\BLACK} = \{I_1(v) \mid v \in V_{\BLACK}\}$
  such that
  $G \simeq \Omega\left(\mathbf{D}_{\WHITE} \cup \mathbf{I}_{\BLACK}\right)$.
  Note that by \autoref{lemma:equivalence}, if $\Graph$ is a colored unit 2-interval graph, then there exists a split $(S,f)$ in the family $\Ufamily$, and we know how to construct a colored unit 2-interval representation of $\Graph$ given a unit 1-interval representation of $S$ by defining the 2-interval associated to a white vertex $v\in V_{\WHITE}$ as the union of the interval associated to $\g_1(v)$ and the interval associated to $\g_2(v)$; and the 1-interval associated to a black vertex $v\in V_{\BLACK}$ as the interval associated to the single vertex $\g(v)$.

 For each variable $x_i$ with $ i\in \{1,\dots, n\}$, if $\Phi(x_i) = true$, we represent the variable gadget $\hat{V_i}$
  as shown in \autoref{fig:true2}, which corresponds exactly to Case~\ref{case1} of \autoref{claimsplitG}. 
On the other hand, if $\Phi(x_i) = false$, we represent $\hat{V_i}$ %
as in \autoref{fig:false2}, which corresponds to Case~\ref{case2} of \autoref{claimsplitG}. %
Notice that in both representations, the literals that are true have an isolated representative, i.e., one of the intervals associated to them is unused in the representation of $\hat{V_i}$ and remains completely free to display intersections with external neighbors.

After this, it only remains to explain the connections introduced by the clauses.
\begin{claim}\label{claim3clause}
    Given a 3-clause $(x_i \vee x_j \vee x_k)$, there exists a unit interval representation of the subgraph of $G_{\Psi}$ induced by the vertices of the variable gadgets $\hat{V_i}, \hat{V_j}$ and $\hat{V_k}$. 
\end{claim}

\begin{claimproof}
Each of the variable gadgets can be represented as in \autoref{fig:true2} or \autoref{fig:false2}.
    To represent the edges associated to the 3-clauses, we first notice that, since the 3-clauses are positive monotone, true literals correspond to true variables. As we are assuming that we have a satisfying assignment, we only have three cases (up to symmetry), which correspond to the three variables being true; exactly two variables being true; and only one variable being true.
The literals that are true have a whole free interval to display the intersection, whereas the literals that are false only have the extreme of an interval (while the other extreme is glued to the rest of the representation of the gadget, see \autoref{fig:false2}). 
Let $(x_i \vee x_j \vee x_k)$ be a 3-clause, with $i,j,k \in \{1,\dots,n\}$.
If the three variables are true, we can easily represent the clause by making the three free intervals of the variables -- w.l.o.g. $I_2(x_i^1), I_2(x_j^1), I_2(x_k^1)$ -- intersect at the same time. 
On the other hand, if only one variable -- say $x_i$ -- is false, we can add the two free intervals --$I_2(x_j^1), I_2(x_k^1)$ -- to the corresponding extreme of the gadget of the false variable, as in \autoref{fig:ftt}. 
Finally, if two variables are false -- say $x_i, x_k$ --, then we need to merge the two interval representations associated to their gadgets and add the free interval -- $I_2(x_j^1)$ -- in the middle, as in \autoref{fig:fft}.
Note that the interval representations given in the figures are not unit, but they are proper, so at the end we will be able to use the algorithm described in \cite{DBLP:journals/dm/BogartW99} to turn it into a unit one.
\end{claimproof}

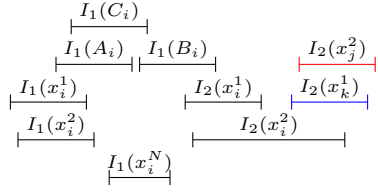
\begin{figure}
     \begin{center}
    \begin{tikzpicture}
      \interv{C_i}{0.1}{0.5}{black};
      \interv{A_i}{-0.1}{0}{black};
      \interv{B_i}{1}{0}{black};
      \interv{x_i^1}{-0.7}{-0.5}{black};
      \intervsecond{x_i^1}{1.6}{-0.5}{black};
      \interv{x_i^2}{-0.6}{-1}{black};
      
      \intervsecond{x_j^2}{3.1}{0}{red};
     \intervsecond{x_k^1}{3}{-0.5}{blue};
      \intervalsecond{x_i^2}{1.7}{-1}{2}{black};

      \interval{x_i^N}{0.6}{-1.5}{0.8}{black};
    \end{tikzpicture}
  \end{center}
    \caption{Representation of a 3-clause $(x_i \vee x_j \vee x_k)$, where $x_i$ is set to false while $x_j, x_k$ are set to true.}
    \label{fig:ftt}
\end{figure}

\begin{figure}
     \begin{center}
    \begin{tikzpicture}
      \interv{C_i}{0.1}{0.5}{black};
      \interv{A_i}{-0.1}{0}{black};
      \interv{B_i}{1}{0}{black};
      \interv{x_i^2}{-0.7}{-0.5}{black};
      \intervsecond{x_i^2}{1.6}{-0.5}{black};
      \interv{x_i^1}{-0.6}{-1}{black};
      
      \intervalsecond{x_j^1}{2.9}{0}{0.6}{red};
     
      \intervalsecond{x_i^1}{1.7}{-1}{1.6}{black};

      \interval{x_i^N}{0.6}{-1.5}{0.8}{black};
      
      \interval{C_k}{4.1}{0.5}{1.1}{blue};
      \interv{A_k}{3.9}{0}{blue};
      \interv{B_k}{5}{0}{blue};
      \interval{x_k^2}{3.8}{-0.5}{0.6}{blue};
      \intervalsecond{x_k^2}{5.7}{-0.5}{1}{blue};
      \interval{x_k^1}{3.2}{-1.5}{1.1}{blue};
      
      \intervalsecond{x_k^1}{5.5}{-1}{1}{blue};

      \interval{x_k^N}{4.7}{-1.5}{0.6}{blue};
    \end{tikzpicture}
  \end{center}
    \caption{Representation of a 3-clause $(x_i \vee x_j \vee x_k)$, where $x_i$ and $x_k$ are set to false and $x_j$ is set to true.}
    \label{fig:fft}
\end{figure}
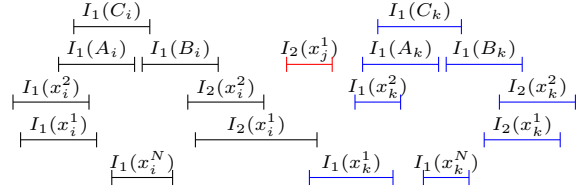

After representing all the 3-clauses, we can assume that the representations of some of the variable gadgets have been merged two by two (we will never have to merge a gadget more than once since a variable occurs in exactly one 3-clause in $\Psi$) and we can fix them in the real line separated from one another. The separation between them can be arbitrarily large, and needs to be at least greater than the space needed to place the remaining intervals. The variable gadgets that have not been merged can also be fixed in the real line, while the unused free intervals (corresponding to true literals), the intervals $I_1(L_{i,j}^{\alpha})$, and the intervals $I_1(p_{i,j}^{\alpha})$ remain unplaced.

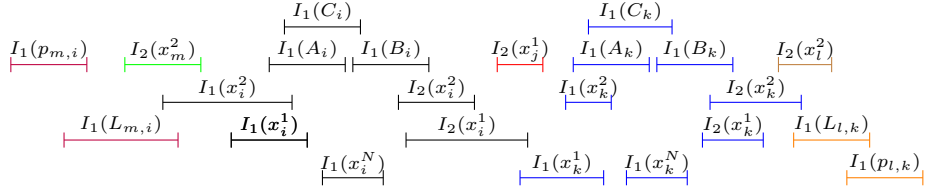
\begin{figure}
     \begin{center}
    \begin{tikzpicture}
      \interv{C_i}{0.1}{0.5}{black};
      \interv{A_i}{-0.1}{0}{black};
      \interv{B_i}{1}{0}{black};
      \interval{x_i^2}{-1.5}{-0.5}{1.7}{black};
      \intervsecond{x_i^2}{1.6}{-0.5}{black};
      \interv{x_i^1}{-0.6}{-1}{black};

      \intervsecond{x_m^2}{-2}{0}{green};
      \interval{L_{m,i}}{-2.8}{-1}{1.5}{purple};
      \interv{p_{m,i}}{-3.5}{0}{purple};

    \intervalsecond{x_l^2}{6.6}{0}{0.7}{brown};
    \interv{L_{l,k}}{6.8}{-1}{orange};
      \interv{p_{l,k}}{7.5}{-1.5}{orange};

      \interv{x_i^1}{-0.6}{-1}{black};
      
      \intervalsecond{x_j^1}{2.9}{0}{0.6}{red};
     
      \intervalsecond{x_i^1}{1.7}{-1}{1.6}{black};

      \interval{x_i^N}{0.6}{-1.5}{0.8}{black};
      
      \interval{C_k}{4.1}{0.5}{1.1}{blue};
      \interv{A_k}{3.9}{0}{blue};
      \interv{B_k}{5}{0}{blue};
      \interval{x_k^2}{3.8}{-0.5}{0.6}{blue};
      \intervalsecond{x_k^2}{5.7}{-0.5}{1.2}{blue};
      \interval{x_k^1}{3.2}{-1.5}{1.1}{blue};
      
      \intervalsecond{x_k^1}{5.6}{-1}{0.8}{blue};

      \interval{x_k^N}{4.6}{-1.5}{0.8}{blue};

    \end{tikzpicture}
  \end{center}
    \caption{Representation of a longest contiguous block of intervals, where each color represents the intervals associated to a different variable. A longest contiguous block occurs when there is a clause $(x_i \vee x_j \vee x_k)$, where $x_i$ and $x_k$ are set to false and both of them also appear as positive literals in a 2-clause.}
    \label{fig:contiguousblock}
\end{figure}

Now, to display the 2-clauses, we distinguish two cases. First, if both literals are true, then there exists a free interval for each, and we can represent the clause in a separate part of the real line (there is one $L_{i,j}^{\alpha}$ and one $p_{i,j}^{\alpha}$ per clause, so these intervals will never cause a problem).
Secondly, if one of the literals is false, then the free interval associated to the true literal needs to be glued to the extreme of the representation of the variable gadget of the false one. 
Note that there is always one free extreme because the 3-clauses use at most one extreme per variable gadget (and we can extend $I_j(x_i^2)$ to allow the intersection while keeping the representation proper).
Note also that we will never need more than two extremes to obtain a representation because, since each variable occurs twice positive and once negated, we can have at most two false literals (when the variable is set to false).

Since we have constructed a proper interval representation, we can now use the algorithm described in~\cite{DBLP:journals/dm/BogartW99} to turn the representation into a unit one, as mentioned before. 

\end{proof}

Let us now prove the converse implication.

\begin{lemma}\label{conversemain}
  If the constructed graph $\Graph=(V,E)$, $V= V_{\WHITE} \cup V_{\BLACK}$, admits a colored unit 2-interval representation, then the original formula $\Psi$ is satisfiable.
\end{lemma}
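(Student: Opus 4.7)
The plan is to invert the construction of \autoref{directmain}. Given a colored unit 2-interval representation of $\Graph$, \autoref{lemma:equivalence} produces a split $(S,f) \in \Ufamily$. By \autoref{claimsplitG}, for each variable $x_i$ the subgraph $S[\hat{V_i}]$ satisfies \emph{exactly one} of two possibilities: either both $x_i^1$ and $x_i^2$ admit an isolated representative (Case~1), or $x_i^N$ admits an isolated representative (Case~2). I define the truth assignment $\phi$ by setting $\phi(x_i) = \mathrm{true}$ precisely when Case~1 holds for $x_i$. The rest of the proof consists in checking, clause by clause, that $\phi$ satisfies $\Psi$.

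The guiding intuition is that a literal vertex $\ell$ is ``free'' to form external adjacencies in $S$ exactly when it has an isolated representative, i.e., exactly when $\phi$ makes $\ell$ true. When $\ell$ is false, both of its representatives are entangled with the private black vertices of the surrounding variable gadget: by Case~2 of \autoref{splitsneeded} when $\ell \in \{x_i^1, x_i^2\}$ and $\phi(x_i) = \mathrm{false}$, and by Case~1 of \autoref{splitsneeded} when $\ell = x_i^N$ and $\phi(x_i) = \mathrm{true}$. The core of the argument is to show that, in this entangled configuration, the external edges imposed by an unsatisfied clause necessarily create a forbidden induced subgraph for unit 1-interval graphs (a claw, net, tent, or induced cycle of length at least four) inside $S$, contradicting $(S,f) \in \Ufamily$.

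For a positive monotone 3-clause $C_\alpha = (x_i \vee x_j \vee x_k)$, suppose $\phi(x_i) = \phi(x_j) = \phi(x_k) = \mathrm{false}$. Then each representative of $x_i^1, x_j^1, x_k^1$ is adjacent to two members of its private triangle $\{A_\cdot, B_\cdot, C_\cdot\}$. The triangle of external edges added by the clause gadget must be witnessed by at least one edge between representatives for each of the three pairs, and I will argue via a case split on which representatives host each external edge that combining these external edges with the adjacencies to the private triangles produces an induced net --- very much in the spirit of \autoref{notcase3} and \autoref{notcase4}. For a 2-clause $C_\alpha = (\ell_1 \vee \ell_2)$ whose gadget contributes the public vertex $L_{i,j}^{\alpha}$ and its private pendant $p_{i,j}^{\alpha}$, these two black vertices will play the role the private triangle played in the 3-clause argument: if both $\phi(\ell_1)$ and $\phi(\ell_2)$ are false, the triangle $\{\ell_1, \ell_2, L_{i,j}^{\alpha}\}$ together with $p_{i,j}^{\alpha}$ and one private vertex of each surrounding variable gadget forces either an induced net around $L_{i,j}^{\alpha}$ or an induced 4-cycle.

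The main obstacle will be the clause-level case analysis, particularly for 2-clauses, since they can mix positive ($x_i^2$) and negated ($x_i^N$) literals and the entangled configurations of the two variable gadgets differ accordingly (Case~2 vs.\ Case~1 of \autoref{splitsneeded}). Verifying that the forbidden-subgraph argument works uniformly across all four polarity combinations, and robustly against the freedom in choosing which pair of representatives witnesses each external edge, is where most of the bookkeeping will go.
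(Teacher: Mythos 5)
Your proposal follows essentially the same route as the paper: extract a split $(S,f)\in\Ufamily$ via \autoref{lemma:equivalence}, read off the assignment from the two configurations of \autoref{claimsplitG}, and then show that a clause with no isolated (i.e., true) literal representative forces a forbidden induced subgraph --- which is exactly the content of the paper's Claims~\ref{representable 3clause} and~\ref{representable 2clause}, where the net $\{x_i^r, x_j^s, L_{i,j}^{\alpha}, A_i, A_j, p_{i,j}^{\alpha}\}$ (or a claw centered at $L_{i,j}^{\alpha}$) plays the role you anticipate for the 2-clauses. The clause-level case analysis you defer is carried out in the paper along the lines you sketch, so the plan is sound and matches the paper's argument.
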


\begin{proof}

 Assume that the constructed graph $\Graph$ admits a colored unit 2-interval representation where black vertices are represented by unit 1-intervals and white vertices are represented by unit 2-intervals. As in \autoref{claimsplitG}, we study the splits $(S,f)\in \Ufamily$. 
 
 We have already seen in \autoref{claimsplitG} that there are only two possible configurations for $S[\hat{V_i}]$, up to symmetry. %
 Let us assign a truth value to each of the configurations. %
If $S[\hat{V_i}]$ satisfies condition \ref{case1proof} of Claim~\ref{claimsplitG}, we set $\Phi(x_i) = true$. Otherwise, if it satisfies condition \ref{case2proof} of Claim~\ref{claimsplitG}, then we set $\Phi(x_i) = false$. Recall that this implies that there is a representative of the vertices representing true literals which remains isolated from its variable gadget.%

The following claims %
restrict the structure of a representable clause gadget.
Both use similar arguments, so only the first proof is included here.
Given a clause gadget $\hat{C_{\alpha}}$ in $G$, we define the clause gadget $S[\hat{C_{\alpha}}]$ in $S$ as the set of representatives of the edges and vertices of $\hat{C_{\alpha}}$. %

\begin{claim}\label{representable 3clause}
Let $(S,f)$ be an arbitrary split in $\Ufamily$. Then, for every 3-clause, 
at least one of the representatives of the literal vertices incident to the clause gadget in $S$ must be an isolated vertex.
\end{claim}

\begin{claimproof}
    Towards a contradiction, we assume that there exists  a 3-clause gadget in $S$ such that none of the representatives of the literal vertices adjacent to the clause gadget are isolated.
     Let $C_{\alpha}= x_i \vee x_j \vee x_k$, with $i,j,k \in \{1,\dots,n\}$ be a (monotone positive) 3-clause. Each of the literal vertices has two external neighbors. In $S$, either the two external neighbors are incident to the same representative of the literal vertices (and thus only one representative is incident to the clause gadget), or each of them is incident to a different representative. We distinguish two cases, depending on whether only one representative of each literal vertex is incident to the clause gadget, or whether there is at least one literal vertex such that both of its representatives are incident to the clause gadget:
        \begin{itemize}
            \item If only one representative of each literal vertex is incident to the clause gadget in $S$, then w.l.o.g., %
            the clause gadget is formed by edges $\{(\g_1(x_i^1), \g_1(x_j^1)), \allowbreak (\g_1(x_j^1), \g_1(x_k^1)), \allowbreak (\g_1(x_k^1), \g_1(x_i^1)) \}$. By assumption, none of the vertices incident to the clause gadget in $S$ are isolated, so they are all connected to at least one black vertex of their variable gadget. Thus, without loss of generality, $\{\g_1(x_i^1), \g_1(x_j^1), \g_1(x_k^1), A_i, A_j, A_k\}$ form a net (the readers can convince themselves looking at \autoref{fig:construction}). Note that when we say without loss of generality, we are using the symmetry between $A_i$ and $B_i$.  
            \item If there is at least one literal vertex such that both of its representatives are incident to the clause gadget, then w.l.o.g., %
            the clause gadget in $S$ contains edges $\{(\g_1(x_i^1), \g_1(x_j^1)),\allowbreak (\g_1(x_k^1), \g_2(x_i^1)) \}$ (and eventually, edges between representatives of $x_j^1$ and $x_k^1$). %
             Then, since one of the representative of $x_i^2$ also has a private neighbor outside of the variable gadget, either the subgraph induced by $\{A_i, \g_1(x_i^1), \g_1(x_i^2)\}$ or the subgraph induced by $\{B_i, \g_2(x_i^1), \g_2(x_i^2)\}$ (and one private neighbor of each of the three vertices, where the private neighbor of $A_i$ and $B_i$ is $x_i^N$) is a net. This situation is depicted in \autoref{fig:netcase1}.
        
        \end{itemize}

                   \begin{figure}
        \centering
        \begin{tikzpicture}[transform shape, scale=0.6]
          \node[vertex,draw,white,minimum size=1.3cm,fill=black] (A) at (-2,1.5) {$A_i$};
            \node[vertex,draw,white,minimum size=1.3cm,fill=black] (B) at (-2,-0.5) {$B_i$};
            \node[vertex,draw,white,minimum size=1.3cm,fill=black] (C) at (-4,0.75) {$C_i$};
            \node[vertex,draw,minimum size=1.3cm,fill=white!40] (x1) at (1,4) {$\g_1(x_i^1)$};
            \node[vertex,draw,minimum size=1.3cm,fill=white!40] (x2) at (1,2.5) {$\g_1(x_i^2)$};
            \node[vertex,draw,minimum size=1.3cm,fill=white!40] (y1) at (3,4) {$x_j^1$};
            \node[vertex,draw,minimum size=1.3cm,fill=white!40] (y2) at (3,2.5) {$x_k^1$};
            \node[vertex,draw,minimum size=1.3cm,fill=white!40] (x12) at (1,1) {$\g_2(x_i^1)$};
            \node[vertex,draw,minimum size=1.3cm,fill=white!40] (x22) at (1,-0.5) {$\g_2(x_i^2)$};
            \node[vertex,draw,minimum size=1.3cm,fill=white!40] (xn) at (1,-2.5) {$x_i^N$};
            \draw[red, line width= 0.7mm] (x1) -- (y1);
            \draw[red, line width= 0.7mm] (x2) -- (y2);
            \draw[red, line width= 0.7mm] (x1) -- (x2)-- (A) -- (xn) -- (A) -- (x1);
            
            \draw (C) -- (A) -- (x1) -- (x2) -- (A) -- (xn);
            \draw (C) -- (B) -- (x12) -- (x22) -- (B) -- (xn);
            \draw (xn) edge[bend left] (C) (C) edge [bend left] (x1);
            \draw (C) edge [bend left] (x2);
    \end{tikzpicture}
    \caption{In red, the net created if both representatives of $x_i^1$ are incident to the clause gadget in $S$ and $\g_1(x_i^2)$ is incident to an external edge.}\label{fig:netcase1}
        \end{figure}
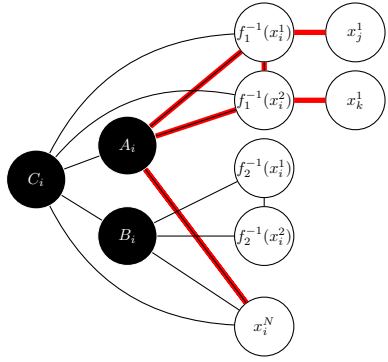

       In both cases, the resulting graph $S$ would not be a unit 1-interval graph, contradicting the hypothesis.
    
\end{claimproof}

\begin{claimrep}\label{representable 2clause}
Let $(S,f)$ be an arbitrary split in $\Ufamily$. Then, for every 2-clause, 
at least one of the representatives of the literal vertices incident to the clause gadget in $S$ must be an isolated vertex. 
\end{claimrep}
\begin{claimproof}
Towards a contradiction, we assume that there exists a 2-clause gadget in $S$ such that none of the representatives of the literal vertices adjacent to the clause gadget are isolated.
 Let $C_{\alpha}= x_i^r \vee x_j^s$, with $i,j \in \{1,\dots,n\}$, be a 2-clause, where the indices $r,s\in \{2, N\}$ indicate which occurrence of the variable appears in the clause. Again, there are two options:
        \begin{itemize}
            \item W.l.o.g., the clause gadget in $S$ %
            comprises edges $\{(\g_1(x_i^r), \g_1(x_j^s)), (\g_1(x_j^s), L_{i,j}^{\alpha}),\allowbreak (L_{i,j}^{\alpha}, \g_1(x_i^r)) \}$. Then, without loss of generality, we will have a net induced by $\{x_i^r, x_j^s, L_{i,j}^{\alpha}, A_i, A_j, p_{i,j}^{\alpha}\}$ (since $L_{i,j}^{\alpha}$ is black and $\g(L_{i,j}^{\alpha})$ consists of a single element, this unique representative will be incident to the clause gadget and adjacent to $p_{i,j}^{\alpha}$ at the same time). The readers can convince themselves looking at figure \autoref{fig:construction2clause}. 
            \item W.l.o.g., the clause gadget in $S$ %
            contains edges $\{(\g_1(x_i^r), \g_1(x_j^s)), (L_{i,j}^{\alpha}, \g_2(x_i^r)) \}$. Suppose first that $x_i^r$ is $x_i^2$. As in the case of 3-clauses, either the subgraph induced by
        $\{A_i, \g_1(x_i^1), \g_1(x_i^2)\}$ or by $\{B_i, \g_2(x_i^1), \g_2(x_i^2)\}$ (and one private neighbor of each of the three vertices, where the private neighbor of $A_i$ and $B_i$ is $x_i^N$) is a net.
        On the other hand, if $x_i^r$ is $x_i^N$, since this literal only occurs in 2-clauses and the vertex $L_{i,j}^{\alpha}$ for 2-clauses is black, then it cannot be the case that $\g_1(x_i^N)$ is adjacent to $x_j^s$, and $\g_2(x_i^N)$ is adjacent to $L_{i,j}^{\alpha}$. Indeed, if this happened, $L_{i,j}^{\alpha}$ would be the center of an induced $K_{1,3}$ with leaves $p_{i,j}^{\alpha}$, $\g_2(x_i^N)$, and a representative of $x_j^s$ (which is not adjacent to $\g_2(x_i^N)$ by assumption). This would contradict the fact that $S$ is a unit 1-interval graph. 
        The illustration of the $K_{1,3}$ created can be seen in \autoref{fig:construction2clause} removing the edge $(x_i^1, x_i^2)$, and replacing $x_i^1$ with $x_i^N$.
        \end{itemize}
         In both cases, the resulting graph $S$ would not be a unit 1-interval graph, contradicting the hypothesis.
\end{claimproof}

 The previous claims imply that there is an isolated literal vertex incident to every 3-clause and to every 2-clause.
Since literal vertices that have an isolated representative correspond to true literals in the assignment fixed before, it follows that there is a true literal per clause, and thus, all clauses are satisfied. This finishes the proof of the converse direction.

\end{proof}

As the problem is clearly in \NP, the polynomial-time construction together with Lemmas \ref{directmain} and \ref{conversemain} conclude the proof of 
\autoref{thm:coloredunit2hard}. The bound on the degree follows because the constructed graph $G$ has maximum degree 6 (the positive literal vertices have degree 4 in the variable gadget and are incident to 2 external edges).

\subsection{Hardness of \textsc{Unit 2-Interval Recognition}}\label{unit}

We show next that \textsc{Colored Unit $2$-Interval Recognition} is polynomial-time reducible to
\textsc{Unit 2-Interval Recognition}, which yields the main result of the paper:

\begin{theorem}
  \label{Color Unit 2-Interval P-reduces 2-Interval}
  \textsc{Unit 2-Interval Recognition} is \NP-complete, even for graphs of degree at most 7.
\end{theorem}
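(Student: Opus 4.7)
The plan is a polynomial-time reduction from \textsc{Colored Unit 2-Interval Recognition} (shown \NP-complete in Theorem~\ref{thm:coloredunit2hard}, even with maximum degree 6). Given a colored instance $(G,\gamma)$, build an uncolored graph $G'$ by attaching a small private gadget to each black vertex — in its simplest form, a single new pendant vertex $v^*$ adjacent only to $v$. This raises the maximum degree by at most one, so $G'$ has maximum degree at most 7. It then remains to show the equivalence: $(G,\gamma)$ admits a colored unit 2-interval representation if and only if $G'$ admits a unit 2-interval representation. Membership in \NP{} is standard since a representation with rational endpoints of polynomial bit-length is a succinct certificate.

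For the forward direction, start from a colored representation in which each black $v$ is the unit 1-interval $I(v)$ and each white $w$ is a unit 2-interval $(I_1(w),I_2(w))$. Declare $I_1(v):=I(v)$, and reserve for each black vertex a disjoint empty private region of the real line far from the existing representation; in it, place the new interval $I_2(v)$ together with one interval of $v^*$ intersecting it, and place the second interval of $v^*$ slightly further out so that it intersects nothing. The private regions are pairwise disjoint and disjoint from the original representation, so no edge is created or destroyed, yielding a unit 2-interval representation of $G'$.

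For the backward direction, take a unit 2-interval representation $\mathcal{R}$ of $G'$. The key observation is that each pendant $v^*$ has degree 1 in $G'$, so its two unit intervals intersect nothing in $\mathcal{R}$ except possibly $v$'s two unit intervals. The claim to establish is that $\mathcal{R}$ can be normalized so that for every black vertex $v$, one specific interval of $v$, together with both intervals of $v^*$, can be moved to an empty region of the line, leaving the other interval of $v$ solely responsible for realizing $v$'s original $G$-neighborhood. Granting this normalization, restricting $\mathcal{R}$ to $V(G)$ and keeping only the surviving interval of each black vertex gives a colored unit 2-interval representation of $(G,\gamma)$: every original edge is still witnessed, black vertices are unit 1-intervals and white vertices remain unit 2-intervals.

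The main obstacle is the normalization step: one must argue that the interval of $v$ being discarded is not simultaneously needed to realize some edge $(u,v)$ of $G$, so that sliding it (and $v^*$) off the active region is safe. Intuitively the pendant provides a degree of freedom allowing this move, but if the single pendant is not sufficient in isolation, the gadget can be strengthened while keeping exactly one new edge at $v$ itself, for instance by replacing $v^*$ by a short rigid chain $v^*\!-\!v^{**}$ or a tiny private structure hanging off $v^*$. Such a gadget adds no extra edges at $v$, preserves the degree-7 bound, and enforces enough rigidity for the normalization to go through; combined with the forward direction this yields \NP-hardness of \textsc{Unit 2-Interval Recognition}.
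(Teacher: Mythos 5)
Your high-level strategy is the same as the paper's (reduce from \textsc{Colored Unit 2-Interval Recognition} by attaching a private gadget to each black vertex so that one of its two intervals is ``used up''), but the gadget you propose does not do the forcing, and this is precisely where the whole reduction lives. A pendant vertex $v^*$ (or any structure attached to $v$ through the single edge $(v,v^*)$, such as a chain $v^*\!-\!v^{**}$) only requires some interval of $v$ to overlap some interval of $v^*$, and this overlap can be made arbitrarily small at the tip of one of $v$'s intervals; the rest of both intervals of $v$ remains available to realize edges of $G$. Concretely, take $G=K_{1,3}$ with the center $v$ black and the three leaves white: this colored instance is a NO (a single unit interval cannot meet three pairwise disjoint intervals), yet after adding a pendant (or a pendant path) at $v$ the uncolored graph is just $K_{1,4}$ with a tail, which is a unit 2-interval graph. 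So the backward direction of your reduction is false, and no strengthening that keeps ``exactly one new edge at $v$'' can repair it, because a single forced intersection never consumes an interval. Your degree accounting also silently assumes the fix exists.

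What the paper does instead is attach a gadget joined to $v$ by \emph{two} edges: a triangle $v,a_v^0,b_v^0$ where each of $a_v^0,b_v^0$ additionally has three private leaves, making each of them the center of an induced $K_{1,4}$. Since a unit interval cannot meet three pairwise disjoint intervals, one interval of $a_v^0$ must simultaneously meet $v$, $b_v^0$ and one leaf (and symmetrically for $b_v^0$), which forces one interval of $v$ to lie properly inside $I(a_v^0)\cup I(b_v^0)$, flanked by leaves; as $a_v^0,b_v^0$ are private, any external interval meeting that interval of $v$ would create a forbidden edge, so the second interval of $v$ alone must realize $v$'s neighborhood in $G$ (this is \autoref{claim3}). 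That covering/sandwiching argument is the missing idea in your proposal; the degree bound of 7 then comes from black vertices of the colored instance having degree at most 5 and gaining two gadget edges.
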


\begin{proof}
  We reduce from \textsc{Colored Unit $2$-Interval Recognition}, which is \NP-hard by \autoref{thm:coloredunit2hard}.
  Given any instance $(G, \gamma)$ of \textsc{Colored Unit $2$-Interval Recognition},
  where $G = (V, E)$ is a graph and
  $\gamma : V \to \{\WHITE, \BLACK\}$ is a vertex-coloring map, we construct an equivalent instance $G' = (V', E')$ of \textsc{Unit 2-Interval Recognition}.
  Define $n = |V|$ and $V_{\texttt{c}} = \{u \mid u \in V \;\wedge\; \gamma(u) = \texttt{c}\}$
  for $c \in \{\WHITE, \BLACK\}$
  (so that $n = \left|V_{\WHITE}\right| + \left|V_{\BLACK}\right|$).
  
  We obtain $G' = (V', E')$ from $G$ by replacing every vertex $v\in V_{\BLACK}$ by the gadget $B_v$ depicted in \autoref{figure:black vertex graph}, which we also call black vertex gadget. Formally, for every $v\in V_{\BLACK}$, we add the vertices $V_{v} = \{ a_v^i, b_v^i \mid  0\leq i \leq 3\} $ and the edges $E_{v} =\{(v, a_{v}^{0}), \allowbreak (a_{v}^{0}, a_{v}^{i}), (v, b_{v}^{0}),\allowbreak (b_{v}^{0}, b_{v}^{i}), \allowbreak (a_{v}^{0}, b_{v}^{0})  \mid 1\leq i \leq 3\}$.
  The gadget $B_v$ is exactly the graph induced by the union of $V_v$ and vertex $v$. 
  Note that the vertex $v$ of $B_v$ is \emph{public}, that is, it is adjacent to vertices of $B_v$ and to vertices outside of $B_v$, while the rest of the vertices of $B_v$ are \emph{private}, i.e., they are only adjacent to vertices of $B_v$.

  We have thus constructed a graph $G'$ with vertex set $V'=V \cup \{V_v \mid v \in V_{\BLACK}\}$ and edge set $E'=E \cup \{E_v \mid v \in V_{\BLACK}\}$. Note that $G'$ contains $G$ as an induced subgraph, as $G'[V]=G$. Combining this with the replacement of every vertex in $V_{\BLACK}$ by a gadget with 9 vertices and 9 edges, it follows that 
  $|V'| = \left|V_{\WHITE}\right| + 9\,\left|V_{\BLACK}\right|$ and
  $|E'| = |E| + 9\,\left|V_{\BLACK}\right|$.

      \begin{figure}[h]
      \centering
      \begin{tikzpicture}
        [transform shape, scale=0.7,
          vertex/.style={draw,circle,fill=black!0,minimum size=6pt,inner sep=0pt}
        ]
        \node [vertex, fill, color =black, label=left:$v$] (v) at (0,0) {};
        \node [vertex,label=left:$a_{v}^{0}$]  (a0) at (-1,-1) {};
        \node [vertex,label=right:$b_{v}^{0}$] (b0) at (1,-1) {};
        \draw (v) -- (a0) -- (b0) -- (v);
        \foreach \i in {1,2,3} {
          \node [vertex,label=below:$a_{v}^{\i}$] (a\i) at (-2 + \i/2,-2) {};
          \draw (a\i) -- (a0);
          \node [vertex,label=below:$b_{v}^{\i}$] (b\i) at (\i/2,-2) {};
          \draw (b\i) -- (b0);
        }
      \end{tikzpicture}
      \caption{\label{figure:black vertex graph}%
        Gadget $B_v$ used to replace every black vertex $v$ of $G$ in the construction of $G'$. Vertex $v$ is a \emph{public} vertex, as it is adjacent to vertices of the gadget ($a_v^0$ and $b_v^0$) and vertices outside the gadget (namely, its neighbors in the original graph $G$), whereas the rest of the vertices are \emph{private}, as their only neighbors are vertices from the gadget (the ones shown in the figure).
      }
    \end{figure}
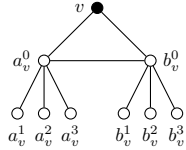

The purpose of the black vertex gadget $B_v$ used to replace every $v\in V_{\BLACK}$ in the construction of $G'$ is to restrict the unit 2-interval representations of $G'$.
Indeed, we will see that it forces one of the intervals associated to $v$ to be used exclusively to represent the gadget, while the other interval is used exclusively to represent the rest of the neighborhood of $v$ (which is exactly its neighborhood in the original graph $G$). 
\autoref{figure:black vertex intervals} shows a unit 2-interval representation $\mathbf{R}= \{(I_1(x),I_2(x))\mid x \in V_v \cup \{v\}\}$ of $B_v$ such that $I_1(v)$ does not have any points in common with the rest of the intervals of $\mathbf{R}$ (i.e., only $I_2(v)$ is used to represent the gadget). Furthermore, in the given representation, $I_2(v)$ cannot intersect any interval associated to a vertex outside of the gadget, as there is no point of $I_2(v)$ that does not intersect either $I_1(a_v^0)$ or $I_1(b_v^0)$, and both $a_v^0$ and $b_v^0$ are private vertices for $v$.
The next claim proves that any unit 2-interval representation of $B_v$ is as in \autoref{figure:black vertex intervals}, up to symmetry. %

 \newcommand{\INTERVALA}[3]{%
    \draw  (#1,#2) -- node [pos=0.5,above] {#3} +(5,0);
    \draw  (#1,#2-0.5) -- (#1,#2+0.5);
    \draw  (#1+5,#2-0.5) -- (#1+5,#2+0.5);
  }
  \newcommand{\INTERVALB}[3]{%
    \draw [very thick] (#1,#2) -- node [pos=0.5,below] {#3} +(5,0);
    \draw [very thick] (#1,#2-0.5) -- (#1,#2+0.5);
    \draw [very thick] (#1+5,#2-0.5) -- (#1+5,#2+0.5);
  }

  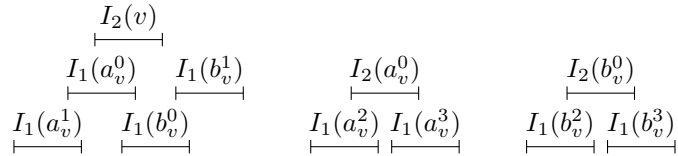
\begin{figure}[h]
    \centering
    \begin{tikzpicture}
      [
        scale=.1775,
      ]
      \INTERVALA{0}{0}{$I_1(a_{v}^{1})$}
      \INTERVALA{4}{4}{$I_1(a_{v}^{0})$}
      \INTERVALA{6}{8}{$I_2(v)$}
      \INTERVALA{8}{0}{$I_1(b_{v}^{0})$}
      \INTERVALA{12}{4}{$I_1(b_{v}^{1})$}
      \begin{scope}[xshift=3cm]
      \INTERVALA{19}{0}{$I_1(a_{v}^{2})$}
      \INTERVALA{22}{4}{$I_2(a_{v}^{0})$}
      \INTERVALA{25}{0}{$I_1(a_{v}^{3})$}
          \begin{scope}[xshift=3cm]
          \INTERVALA{32}{0}{$I_1(b_{v}^{2})$}
          \INTERVALA{35}{4}{$I_2(b_{v}^{0})$}
          \INTERVALA{38}{0}{$I_1(b_{v}^{3})$}
          \end{scope}
      \end{scope}
    \end{tikzpicture}
    \caption{\label{figure:black vertex intervals}%
      A unit 2-interval representation of $B_v$ (\autoref{figure:black vertex graph}), i.e., $\mathbf{D}_{B_v}$ for an arbitrary $v\in V_{\BLACK}$. Note that only one interval of $v$ is used ($I_2(v)$), while the other one remains free to display the rest of the neighborhood of $v$ (and is not represented here). 
    }
  \end{figure}

\begin{claim}\label{claim3}
Let $\{(I_1(x), I_2(x))\mid x \in V_v \cup \{v\}\}$ be a unit 2-interval representation of $B_v$. %
Then, there exist some indices $i,j,k\in \{1,2\}$ such that the representation of $I_i(v), I_j(a_v^0), I_k(b_v^0)$ is contiguous (i.e., the union of the three intervals is an interval) and $I_i(v)$ is properly contained in the union $I_j(a_v^0) \cup I_k(b_v^0)$.
\end{claim}
\begin{claimproof}
    In the following, we denote an interval associated to a vertex by the name of the vertex if it refers to an arbitrary interval from the corresponding 2-interval (i.e., we will write $v$ to denote $I_1(v)$ or $I_2(v)$ when the choice of interval is irrelevant).
    
    Since $a_v^0$ and $b_v^0$ are both centers of an induced $K_{1,4}$, one of the intervals associated to $a_v^0$, say $I_1(a_v^0)$, needs to intersect $v$, $b_v^0$ and one of the $a_v^i$ for some $i\in \{1,2,3\}$, say $a_v^1$ without loss of generality (because of the symmetry). Furthermore, the intervals of $v$ and $b_v^0$ that intersect $I_1(a_v^0)$ also need to intersect each other, as otherwise $I_1(a_v^0)$ would intersect three disjoint intervals, contradicting the fact that the representation is unit. On the other hand, $I_2(a_v^0)$ has to intersect the two remaining $a_v^i$, that is, $a_v^2$ and $a_v^3$. 
    Similarly, one of the intervals associated to $b_v^0$, say $I_1(b_v^0)$, needs to intersect $v$ and $a_v^0$ (which also intersect each other), and one of the $b_v^i$ for some $i\in \{1,2,3\}$, whereas $I_2(b_v^0)$ intersects the two remaining $b_v^i$. 
    Again, without loss of generality, we can assume that $I(b_v^0)$ intersects $b_v^1$ while $I_2(b_v^0)$ intersects $b_v^2$ and $b_v^3$.
    
    Thus, we have that $I_1(a_v^0)$ intersects $v$ and $I_1(b_v^0)$ (which also intersect each other), and $a_v^1$; while $I(b_v^0)$ intersects $v$ and $I_1(a_v ^0)$ (which also intersect each other), and $b_v^1$.
    This implies that the representation of $a_v^1$, $I_1(a_v^0)$, $b_v^1$, $I_1(b_v^0)$ has to be contiguous. 
    Finally, since vertex $v$ is not adjacent to either $a_v^1$ nor $b_v^1$, the only possibility to represent the edges $(v, a_v^0)$ and $(v,b_v^0)$ is by placing an interval associated to $v$, say $I_2(v)$, properly contained in the union $I_1(a_v^0) \cup I_1(b_v^0)$, as in \autoref{figure:black vertex intervals}.
    
\end{claimproof}

The next two claims now prove the correctness of the reduction. 

\begin{claim}\label{colordirect}
    If $G$ is a colored unit $2$-interval graph, then $G'$ is a unit $2$-interval graph.
\end{claim}
\begin{claimproof}
Suppose that $G$ is a colored unit $2$-interval graph.
    Then, by assumption, there exists
  a collection of unit $2$-intervals
  $\mathbf{D}_{\WHITE} = \{(I_1(v), I_2(v)) \mid v \in V_{\WHITE}\}$ and
  a collection of unit intervals
  $\mathbf{I}_{\BLACK} = \{I_1(v) \mid v \in V_{\BLACK}\}$
  such that
  $G \simeq \Omega\left(\mathbf{D}_{\WHITE} \cup \mathbf{I}_{\BLACK}\right)$.
  
  From $\mathbf{D}= \left(\mathbf{D}_{\WHITE} \cup \mathbf{I}_{\BLACK}\right)$,
  we show how to construct a unit $2$-interval representation $\mathbf{D}'$ of $G'$. 
  Recall that $(V_{\WHITE} \cup V_{\BLACK}) = V \subset V'$. Similarly, we will construct $\mathbf{D}'$ such that $\mathbf{D}\subset \mathbf{D}'$. In fact, we will have that $\mathbf{D}' = \mathbf{D} \cup \left(\bigcup_{v\in V_{\BLACK}} \mathbf{D}_{B_v}\right)$, where for every $v\in V _{\BLACK}$, $\mathbf{D}_{B_v}$ is the interval representation of the gadget $B_v$. %
  More precisely, we construct $\mathbf{D}'$ as follows:
  \begin{itemize}
    \item
    For every $v \in V_{\WHITE}$, we add to $\mathbf{D}'$ the $2$-interval
    $(I_1(v), I_2(v))$ from $\mathbf{D}$.
    \item
    For every $v \in V_{\BLACK}$, we add to $\mathbf{D}'$
    the interval $I_1(v)$ from $\mathbf{D}$ together with $\mathbf{D}_{B_v}$, i.e., the interval $I_2(v)$ plus the $2$-intervals
    $(I_1(a_{v}^{k}), I_2(a_{v}^k))$ and $(I_1(b_{v}^{k}), I_2(b_{v}^k))$ for $0 \leq k \leq 3$
    as defined in \autoref{figure:black vertex intervals}.
    
  \end{itemize}
  By construction, $\mathbf{D}'$ is a collection of unit $2$-intervals.
  It is now a simple matter to verify that $G' \simeq \Omega(\mathbf{D}')$.
\end{claimproof}

\begin{claimrep}\label{colorconverse}
     If $G'$ is a unit $2$-interval graph, then $G$ is a colored unit $2$-interval graph.
\end{claimrep}
\begin{claimproof}
    Suppose that $G'$ is a unit $2$-interval graph.
  Then, by assumption, there exists a collection of unit $2$-intervals
  $\mathbf{D'} = \{(I_1(v), I_2(v)) \mid v \in V'\}$ such that
  $G' \simeq \Omega(\mathbf{D'})$.
  From $\mathbf{D'}$, we show how to construct a set $\mathbf{D}$ of $\left|V_{\WHITE}\right|$ unit $2$-intervals and $\left|V_{\BLACK}\right|$ unit intervals. 
  
  Recall that $V \subset V' $. Similarly, we will take $\mathbf{D}$ to be a subset of $\mathbf{D}'$.
  Let $v \in V \subseteq V'$ be a vertex of $G'$.
  We distinguish two cases depending on the color of $v$ in $G$:
  \begin{itemize}
    \item $\gamma(v) = \WHITE$.
    We add to $\mathbf{D}$ the unit $2$-interval
    $(I_1(v), I_2(v))$ of $\mathbf{D}'$.
    \item $\gamma(v) = \BLACK$.
    In $\mathbf{D'}$, we have a pair of intervals $(I_1(v), I_2(v))$. By \autoref{claim3}, w.l.o.g, $I_2(v)$ is used to display the gadget for black vertices, and cannot be used to represent any other edges. This means that all the remaining neighbors of $v$, which are exactly its neighbors in $G$, are displayed by $I_1(v)$.
    Thus, we add to $\mathbf{D}$ the unit interval $I_1(v)$ from $\mathbf{D}'$.
  \end{itemize}
\end{claimproof} 

As the problem is clearly in \NP, combining the fact that the construction of $G'$ can be carried out in polynomial time with Claims \ref{colordirect} and \ref{colorconverse}, we obtain that \textsc{Unit 2-Interval Recognition} is \NP-complete. 
The bound on the degree given in the statement of the theorem follows by construction, from adding the black vertex gadgets (\autoref{figure:black vertex graph}) to the graph constructed in the proof of \autoref{thm:coloredunit2hard} (\autoref{fig:construction}). Indeed, this results in a graph of maximum degree 7, as $C_i$ is adjacent to 5 vertices in the variable gadget and to 2 vertices from the black vertex gadget.

\end{proof}

\subsection{Consequences and generalizations}\label{corollaries}
We now generalize the result for unit $d$-interval graphs, with $d\geq2$, which is not directly implied in graph recognition problems, and for some specific cases of unit $d$-intervals.

\begin{corollaryrep}\label{d-interval}
    Recognizing unit $d$-interval graphs is \NP-complete for every $d\geq 2$.
\end{corollaryrep}

\begin{proof}
    
    We reduce recognition of unit $(d-1)$-interval graphs to recognition of unit $d$-interval graphs, hence the result holds by \autoref{Color Unit 2-Interval P-reduces 2-Interval}.
    
    The idea is similar to the proof of \autoref{Color Unit 2-Interval P-reduces 2-Interval}. Given a graph $G=(V,E)$, we construct in polynomial-time a graph $G'$ by adding to each vertex a gadget similar to the one in \autoref{figure:black vertex graph}. Indeed, for every vertex $v$ in $G$, we create a triangle with vertices $v, a_v^0$ and $b_v^0$, but now $a_v^0$ and $b_v^0$ are adjacent to $2d-1$ independent vertices instead of just 3 (which is the case in \autoref{figure:black vertex graph}). %
    Formally, for every $v\in V$, we add the vertices $$V_{v} = \{ a_v^i, b_v^i \mid  0\leq i \leq 2d-1\} $$ and the edges $$E_{v} =\{(v, a_{v}^{0}), (a_{v}^{0}, a_{v}^{i}), (v, b_{v}^{0}), (b_{v}^{0}, b_{v}^{i}), (a_{v}^{0}, b_{v}^{0})  \mid 1\leq i \leq 2d-1\}$$
    
    We now prove that $G$ has a unit $(d-1)$-interval representation if and only if it has a unit $d$-interval representation.
    First, given a unit $(d-1)$-interval representation, we can build a unit $d$-interval representation as in \autoref{figure:black vertex intervals}.
    However, in this case, instead of having two intervals $I_1(v), I_2(v)$ associated to every vertex $v$, we have $d$ intervals, say $I_1(v), \dots, I_d(v)$. W.l.o.g., the intervals $I_1(v), \dots, I_{d-1}(v)$ are the $d-1$ intervals of the unit $(d-1)$-interval representation of $G$, while $I_d(v)$ plays the role of $I_2(v)$ in \autoref{figure:black vertex intervals}.
    Similarly, $I_1(a_v^0)$ plays the role of $I_1(a_v^0)$ in \autoref{figure:black vertex intervals}, while every $I_j(a_v^0)$, with $1<j\leq d$ is represented as $I_2(a_v^0)$ in \autoref{figure:black vertex intervals}, each intersecting two different $a_v^i$, with $1<i \leq 2d-1$. The same holds for $b_v^0$.
    
    For the converse implication, if we have a unit $d$-interval representation of $G'$, then, using the same argument as in the proof of \autoref{claim3}, we see that for every vertex $v$, we need to use a complete interval of $v$ to represent the gadget added in the construction of $G'$. Therefore, the remaining edges (which correspond exactly to the edges of $G$), need to be displayed using only $d-1$ intervals associated to $v$. This implies that $G$ has a unit $(d-1)$-interval representation. 
\end{proof}

\begin{corollaryrep}\label{corollaryxx}
    Recognizing $(x,\dots, x)$ $d$-interval graphs is \NP-complete for every $x\geq 11$ and every $d\geq 2$.
\end{corollaryrep}
\begin{proof}
    The result follows because the graph constructed in the reduction is a $(11,11)$ 2-interval graph, and every $(11,11)$ 2-interval graph is also a unit 2-interval graph (so the same reduction can be applied). To see this, the reader can verify the $(11,11)$ 2-interval representation of the largest contiguous block in \autoref{fig:5contiguousblock}, and check that the black vertex gadget used in the proof of \autoref{Color Unit 2-Interval P-reduces 2-Interval} is also a $(11,11)$ 2-interval graph.

\newcommand{\openinterval}[5]{%
  \node[minimum size= 2cm] at (#2+11/2,#3+1.5) {  $I_1(#1)$};
  \draw[color=#5] (#2+0.2,#3) node{} -- (#2+11-0.2,#3) node{}; %
}
\newcommand{\openintervalsecond}[5]{%
  \node[minimum size= 2cm] at (#2+11/2,#3+1.5) {  $I_2(#1)$};
  \draw[color=#5] (#2+0.2,#3) node{} -- (#2+11-0.2,#3) node{}; %
}
    \begin{figure}
    \centering
    \begin{sideways}

    \begin{tikzpicture}[ scale=0.17]
      \openinterval{C_i}{40}{10}{5}{black};
      \openinterval{A_i}{30}{5}{5}{black};
      \openinterval{B_i}{41}{5}{5}{black};
      \openinterval{x_i^1}{20}{-1}{5}{black};
      \openintervalsecond{x_i^1}{44}{0}{5}{black};
      \openinterval{x_i^2}{22}{3}{5}{black};

      \openinterval{x_m^1}{11}{5}{5}{green};
      \openinterval{L_{m,i}}{10}{-5}{5}{purple};
      \openinterval{p_{m,i}}{0}{5}{5}{purple};

    \openintervalsecond{x_l^2}{99}{5}{5}{brown};
    \openinterval{L_{l,k}}{100}{-5}{5}{orange};
      \openinterval{p_{l,k}}{110}{-6}{5}{orange};

      \openintervalsecond{x_i^2}{50}{-2}{5}{black};
      
      \openintervalsecond{x_j^2}{55}{5}{5}{red};

      \openinterval{x_i^N}{33}{-5}{5}{black};
      
      \openinterval{C_k}{80}{10}{5}{blue};
      \openinterval{A_k}{70}{5}{5}{blue};
      \openinterval{B_k}{81}{5}{5}{blue};
      \openinterval{x_k^1}{66}{0}{5}{blue};
      \openintervalsecond{x_k^1}{90}{0}{5}{blue};
      \openinterval{x_k^2}{60}{-7}{5}{blue};
      
      \openintervalsecond{x_k^2}{88}{-3}{5}{blue};

      \openinterval{x_k^N}{77}{-4}{5}{blue};
\draw (0,-15) -- (120,-15);

    \foreach \i in {0, 5,...,120} %
    \fill
      [black]
      ( \i,-15)
      circle (3.8 mm)
      node[below=6pt,black] { $\i$ };

    \end{tikzpicture}

    \end{sideways}
 \vspace{0.5cm}
    \caption{An $(11,11)$ 2-interval representation of a longest contiguous block of the graph constructed in the main reduction.}
    \label{fig:5contiguousblock}
\end{figure}
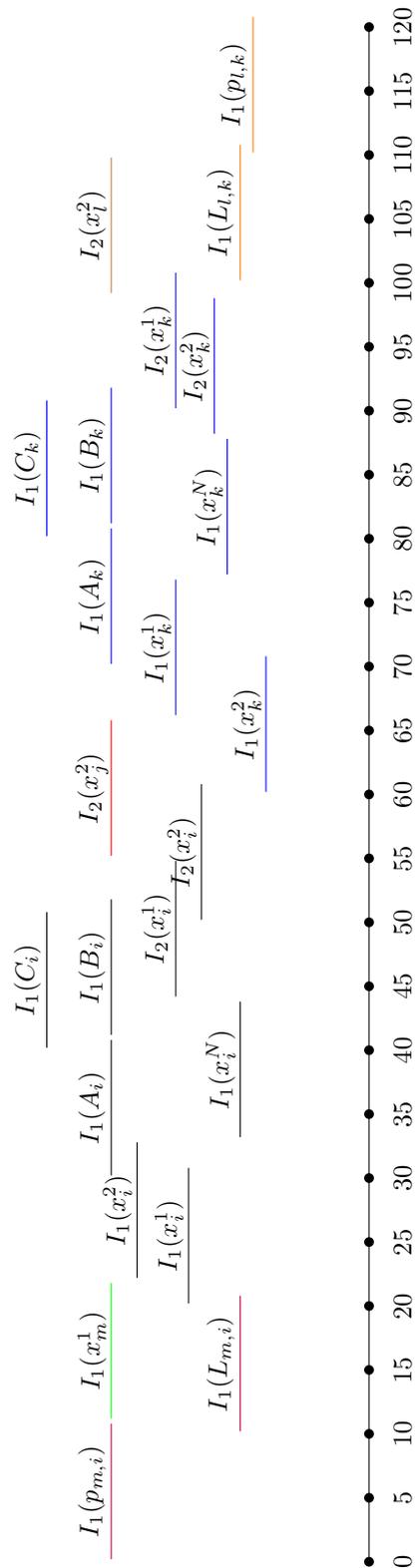

      To generalize for $d>2$, it suffices to check that the gadgets added in the reduction of \autoref{d-interval} are $(11,\dots, 11)$ $d$-interval. Finally, as any $(x, \dots , x)$ $d$-interval graph
can be turned into a $(x + 1, \dots , x + 1)$ $d$-interval graph (by partitioning the $dn$ intervals
into the minimum number of maximal cliques and stretching the intersection of the intervals
in each clique by one unit, as described in~\cite{DBLP:journals/algorithmica/Jiang13}), the graph constructed in the main reduction is an $(x,\dots , x)$ $d$-interval graph for every
$x \geq 11$. However, the graph constructed is not a $(x,\dots , x)$ $d$-interval graph for any
$x < 11$ (this has been checked with the help of an ILP solver). %
\end{proof}

\begin{corollaryrep}\label{cordepth}
     Recognizing depth $r$ unit $d$-interval graphs is \NP-complete for every $r\geq 4$ and every $d\geq 2$.
\end{corollaryrep}
\begin{proof}
    The result follows because the depth of the representation constructed in the hardness proof of \autoref{thm:coloredunit2hard} is 4 (this can be verified by looking at \autoref{fig:contiguousblock}), and the depth of the representation of the black vertex gadget added in the proof of \autoref{Color Unit 2-Interval P-reduces 2-Interval} is 3 (as can be seen in \autoref{figure:black vertex intervals}). Furthermore, the gadgets added to prove the result for $d>2$ have a representation of depth at most 3.
    The corollary generalizes for any depth $r>4$, as for any $r>4$, it is true that there exists a satisfying assignment if and only if the constructed graph $G'$ has a unit 2-interval representation of depth at most $r$.
    
\end{proof}

The following corollary is based on the Exponential Time Hypothesis (ETH). More details on this notion that we are only touching here can be found in~\cite[Chapter 14]{DBLP:books/sp/CyganFKLMPPS15}.

\begin{corollaryrep}\label{corETH}
Unless the ETH fails, \textsc{Unit $d$-interval recognition} does not admit an algorithm with running time $2^{o(|V|+|E|)}$.
\end{corollaryrep}
\begin{proof}
    We have provided a polynomial-time reduction from \textsc{3-SAT} %
    to \textsc{Unit 2-interval recognition} such that given an instance of \textsc{3-SAT} of $n$ variables and $m$ clauses, it outputs an equivalent instance of \textsc{Unit 2-interval recognition} whose size is bounded by $O(n+m)$. 
    Indeed, given an instance of \textsc{3-SAT} of $n$ variables and $m$ clauses, we first build in \autoref{Restricted Sat} an equivalent instance of a special case of \textsc{SAT} with at most $3m$ variables and $7m$ %
    clauses, and then an instance of \textsc{Colored unit 2-interval recognition} with at most $18m$ vertices and $232m$ edges. Finally, to construct an equivalent instance of \textsc{Unit 2-interval recognition}, we also add a linear number of vertices and edges (at most $9|V|$ and $9|E|$, respectively).
    Therefore, if \textsc{Unit 2-interval recognition} admitted an algorithm with running time $2^{o(|V|+|E|)}$, composing the reduction with such an algorithm would yield an algorithm for \textsc{3-SAT} running in time $2^{o(n+m)}$, which would contradict the ETH.
    To generalize the result for $d>2$, notice that the number of vertices and edges that we add in the proof of \autoref{d-interval} is also linear.
\end{proof}

\section{Concluding remarks}\label{conclusion}
We have proven that recognizing unit $d$-interval graphs is \NP-complete for any $d \geq 2$. Furthermore, our reduction implies that recognizing $(x, \dots, x)$ $d$-interval graphs for any $x\geq 11$, and depth $r$ unit $d$-interval graphs for any $r\geq 4$, is also hard. 
These results represent a significant step towards settling the landscape of the complexity of the recognition of the different subclasses of $d$-interval graphs. 

However, some questions still remain open. Since we have shown that recognizing depth 4 unit $d$-interval graphs is \NP-complete and it is known that the recognition of depth 2 unit $d$-interval graphs is polynomial-time solvable~\cite{DBLP:journals/algorithmica/Jiang13}, it still remains to delineate the exact boundary, i.e., study the case of depth 3 unit $d$-interval graphs.
On the other hand, the complexity of recognizing $(x, \dots, x)$ $d$-interval graphs for $x<11$ %
is also unknown.
Finally, we have obtained a lower bound for the running time of an algorithm for recognizing unit 2-intervals. %
Since the brute-force algorithm, running in $\mathcal{O}(2^{n^2})$, is far from achieving it, it would be interesting to reduce this gap.

\bibliographystyle{plainurl}
\bibliography{bib}

\begin{thebibliography}{10}

\bibitem{bar2001unified}
Amotz Bar-Noy, Reuven Bar-Yehuda, Ari Freund, Joseph Naor, and Baruch Schieber.
\newblock A unified approach to approximating resource allocation and
  scheduling.
\newblock {\em J. {ACM}}, 48(5):1069--1090, 2001.
\newblock \href {https://doi.org/10.1145/502102.502107}
  {\path{doi:10.1145/502102.502107}}.

\bibitem{Bar-Yehuda2006}
Reuven Bar-Yehuda, Magn{\'u}s~M. Halld{\'o}rsson, Joseph Naor, Hadas Shachnai,
  and Irina Shapira.
\newblock {Scheduling Split Intervals}.
\newblock {\em SIAM J. Comput.}, 36(1):1--15, 2006.
\newblock \href {https://doi.org/10.1137/S0097539703437843}
  {\path{doi:10.1137/S0097539703437843}}.

\bibitem{DBLP:journals/dm/BogartW99}
Kenneth~P. Bogart and Douglas~B. West.
\newblock A short proof that 'proper = unit'.
\newblock {\em Discret. Math.}, 201(1-3):21--23, 1999.
\newblock \href {https://doi.org/10.1016/S0012-365X(98)00310-0}
  {\path{doi:10.1016/S0012-365X(98)00310-0}}.

\bibitem{Booth1976}
Kellogg~S. Booth and George~S. Lueker.
\newblock {Testing for the Consecutive Ones Property, Interval Graphs, and
  Graph Planarity Using PQ-Tree Algorithms}.
\newblock {\em J. Comput. Syst. Sci.}, 13(3):335--379, 1976.
\newblock \href {https://doi.org/10.1016/S0022-0000(76)80045-1}
  {\path{doi:10.1016/S0022-0000(76)80045-1}}.

\bibitem{Butman2010}
Ayelet Butman, Danny Hermelin, Moshe Lewenstein, and Dror Rawitz.
\newblock Optimization problems in multiple-interval graphs.
\newblock {\em {ACM} Trans. Algorithms}, 6(2), 2010.
\newblock \href {https://doi.org/10.1145/1721837.1721856}
  {\path{doi:10.1145/1721837.1721856}}.

\bibitem{10.2307/j.ctvx5wc04}
Joel~E. Cohen.
\newblock {\em Food Webs and Niche Space}, volume~11 of {\em Monographs in
  Population Biology}.
\newblock Princeton University Press, 1978.

\bibitem{corneil2010lbfs}
Derek~G. Corneil, Stephan Olariu, and Lorna Stewart.
\newblock The {LBFS} structure and recognition of interval graphs.
\newblock {\em {SIAM} J. Discret. Math.}, 23(4):1905--1953, 2010.
\newblock \href {https://doi.org/10.1137/S0895480100373455}
  {\path{doi:10.1137/S0895480100373455}}.

\bibitem{cozzens1982higher}
Margaret~B. Cozzens.
\newblock {\em Higher and Multi-Dimensional Analogues of Interval Graphs}.
\newblock PhD thesis, Rutgers University, 1982.

\bibitem{DBLP:books/sp/CyganFKLMPPS15}
Marek Cygan, Fedor~V. Fomin, Lukasz Kowalik, Daniel Lokshtanov, D{\'{a}}niel
  Marx, Marcin Pilipczuk, Michal Pilipczuk, and Saket Saurabh.
\newblock {\em Parameterized Algorithms}.
\newblock Springer, 2015.
\newblock \href {https://doi.org/10.1007/978-3-319-21275-3}
  {\path{doi:10.1007/978-3-319-21275-3}}.

\bibitem{Fellows2009}
Michael~R. Fellows, Danny Hermelin, Frances~A. Rosamond, and St{\'e}phane
  Vialette.
\newblock {On the Parameterized Complexity of Multiple-Interval Graph
  Problems}.
\newblock {\em Theor. Comput. Sci.}, 410(1):53--61, 2009.
\newblock \href {https://doi.org/10.1016/j.tcs.2008.09.065}
  {\path{doi:10.1016/j.tcs.2008.09.065}}.

\bibitem{DBLP:journals/algorithmica/FellowsKMP95}
Michael~R. Fellows, Jan Kratochv{\'{\i}}l, Matthias Middendorf, and Frank
  Pfeiffer.
\newblock The complexity of induced minors and related problems.
\newblock {\em Algorithmica}, 13(3):266--282, 1995.
\newblock \href {https://doi.org/10.1007/BF01190507}
  {\path{doi:10.1007/BF01190507}}.

\bibitem{Fishburn1985}
Peter~C. Fishburn.
\newblock {\em {Interval Orders and Interval Graphs: A Study of Partially
  Ordered Sets}}.
\newblock Wiley, 1985.

\bibitem{DBLP:journals/algorithmica/Francis0O15}
Mathew~C. Francis, Daniel Gon{\c{c}}alves, and Pascal Ochem.
\newblock The maximum clique problem in multiple interval graphs.
\newblock {\em Algorithmica}, 71(4):812--836, 2015.
\newblock \href {https://doi.org/10.1007/s00453-013-9828-6}
  {\path{doi:10.1007/s00453-013-9828-6}}.

\bibitem{frankl1987open}
Peter Frankl and Hiroshi Maehara.
\newblock Open-interval graphs versus closed-interval graphs.
\newblock {\em Discret. Math.}, 63(1):97--100, 1987.
\newblock \href {https://doi.org/10.1016/0012-365X(87)90156-7}
  {\path{doi:10.1016/0012-365X(87)90156-7}}.

\bibitem{Gambette2007}
Philippe Gambette and St{\'{e}}phane Vialette.
\newblock On restrictions of balanced 2-interval graphs.
\newblock In Andreas Brandst{\"{a}}dt, Dieter Kratsch, and Haiko M{\"{u}}ller,
  editors, {\em Graph-Theoretic Concepts in Computer Science, 33rd
  International Workshop, {WG} 2007, Dornburg, Germany, June 21-23, 2007.
  Revised Papers}, volume 4769 of {\em LNCS}, pages 55--65. Springer, 2007.
\newblock \href {https://doi.org/10.1007/978-3-540-74839-7\_6}
  {\path{doi:10.1007/978-3-540-74839-7\_6}}.

\bibitem{DBLP:journals/siammax/GriggsW80}
Jerrold~R. Griggs and Douglas~B. West.
\newblock Extremal values of the interval number of a graph.
\newblock {\em {SIAM} J. Algebraic Discret. Methods}, 1(1):1--7, 1980.
\newblock \href {https://doi.org/10.1137/0601001} {\path{doi:10.1137/0601001}}.

\bibitem{gyarfas1995multitrack}
Andr{\'a}s Gy{\'a}rf{\'a}s and Douglas West.
\newblock Multitrack interval graphs.
\newblock {\em Congressus Numerantium}, pages 109--116, 1995.

\bibitem{DBLP:journals/algorithmica/Jiang13}
Minghui Jiang.
\newblock Recognizing d-interval graphs and d-track interval graphs.
\newblock {\em Algorithmica}, 66(3):541--563, 2013.
\newblock \href {https://doi.org/10.1007/s00453-012-9651-5}
  {\path{doi:10.1007/s00453-012-9651-5}}.

\bibitem{Joseph1992}
Deborah Joseph, Joao Meidanis, and Prasoon Tiwari.
\newblock Determining {DNA} sequence similarity using maximum independent set
  algorithms for interval graphs.
\newblock In Otto Nurmi and Esko Ukkonen, editors, {\em Algorithm Theory -
  {SWAT} '92, Third Scandinavian Workshop on Algorithm Theory, Helsinki,
  Finland, July 8-10, 1992, Proceedings}, volume 621 of {\em LNCS}, pages
  326--337. Springer, 1992.
\newblock \href {https://doi.org/10.1007/3-540-55706-7\_29}
  {\path{doi:10.1007/3-540-55706-7\_29}}.

\bibitem{KRATOCHVIL1994233}
Jan Kratochvíl.
\newblock A special planar satisfiability problem and a consequence of its
  {NP}-completeness.
\newblock {\em Discret. Appl. Math.}, 52(3):233--252, 1994.
\newblock \href {https://doi.org/10.1016/0166-218X(94)90143-0}
  {\path{doi:10.1016/0166-218X(94)90143-0}}.

\bibitem{lekkeikerker1962representation}
Cornelis~Gerrit Lekkerkerker and Johan~Ch. Boland.
\newblock Representation of a finite graph by a set of intervals on the real
  line.
\newblock {\em Fundamenta Mathematicae}, 51:45--64, 1962.

\bibitem{mcguigan1977presentation}
Robert McGuigan.
\newblock {Presentation at NSF-CBMS Conference at Colby College}, 1977.

\bibitem{mckee1999topics}
Terry~A McKee and Fred~R McMorris.
\newblock {\em Topics in intersection graph theory}.
\newblock SIAM, 1999.

\bibitem{rautenbach2013unit}
Dieter Rautenbach and Jayme~L Szwarcfiter.
\newblock Unit interval graphs of open and closed intervals.
\newblock {\em J. Graph Theory}, 72(4):418--429, 2013.
\newblock \href {https://doi.org/10.1002/jgt.21650}
  {\path{doi:10.1002/jgt.21650}}.

\bibitem{zbMATH03307330}
Fred~S. Roberts.
\newblock Indifference graphs.
\newblock In F.~Harary, editor, {\em Proof Techniques in Graph Theory}, pages
  139--146. Academic Press, NY, 1969.

\bibitem{roberts1969recent}
Fred~S. Roberts.
\newblock On the boxicity and cubicity of a graph.
\newblock In W.~T. Tutte, editor, {\em Recent Progress in Combinatorics}, pages
  301--310. Academic Press, NY, 1969.

\bibitem{roberts1978graph}
Fred~S. Roberts.
\newblock {\em Graph theory and its applications to problems of society}.
\newblock SIAM, 1978.

\bibitem{alexandre}
Alexandre Simon.
\newblock Algorithmic study of 2-interval graphs.
\newblock Master's thesis, Delft University of Technology, 2021.

\bibitem{DBLP:journals/jgt/TrotterH79}
William~T. Trotter and Frank Harary.
\newblock On double and multiple interval graphs.
\newblock {\em J. Graph Theory}, 3(3):205--211, 1979.
\newblock \href {https://doi.org/10.1002/jgt.3190030302}
  {\path{doi:10.1002/jgt.3190030302}}.

\bibitem{Vialette2004}
St\'ephane Vialette.
\newblock On the computational complexity of $2$-interval pattern matching
  problems.
\newblock {\em Theor. Comput. Sci.}, 312(2-3):223--249, 2004.
\newblock \href {https://doi.org/10.1016/j.tcs.2003.08.010}
  {\path{doi:10.1016/j.tcs.2003.08.010}}.

\bibitem{West1984}
Douglas~B. West and David~B. Shmoys.
\newblock Recognizing graphs with fixed interval number is {NP}-complete.
\newblock {\em Disrecte Appl. Math.}, 8:295--305, 1984.
\newblock \href {https://doi.org/10.1016/0166-218X(84)90127-6}
  {\path{doi:10.1016/0166-218X(84)90127-6}}.

\bibitem{yannakakis1982complexity}
Mihalis Yannakakis.
\newblock The complexity of the partial order dimension problem.
\newblock {\em SIAM Journal on Algebraic Discrete Methods}, 3(3):351--358,
  1982.

\bibitem{zhang1994algorithm}
Peisen Zhang, Eric~A Schon, Stuart~G Fischer, Eftihia Cayanis, Janie Weiss,
  Susan Kistler, and Philip~E Bourne.
\newblock An algorithm based on graph theory for the assembly of contigs in
  physical mapping of {DNA}.
\newblock {\em Comput. Appl. Biosci.}, 10(3):309--317, 1994.
\newblock \href {https://doi.org/10.1093/bioinformatics/10.3.309}
  {\path{doi:10.1093/bioinformatics/10.3.309}}.

\end{thebibliography}

\end{document}